\documentclass[11pt]{article}

\usepackage[margin=1in]{geometry}
\usepackage{graphicx} \usepackage{amsmath,amsfonts,amssymb,amsthm,thmtools,bm,dsfont} 

\usepackage{hyperref}
\hypersetup{
	colorlinks=true,
	linkcolor=blue,
	citecolor=blue,
	filecolor=blue,
	urlcolor=blue
}
\usepackage[nameinlink]{cleveref}
\usepackage{enumitem}
\setitemize[1]{leftmargin=5mm}
\usepackage{xcolor}
\usepackage{booktabs}
\usepackage{caption}
\usepackage{subcaption}
\usepackage{xspace}
\usepackage[round,authoryear]{natbib}

\newtheorem{theorem}{Theorem}
\newtheorem{conjecture}[theorem]{Conjecture}
\theoremstyle{definition}

\newtheorem{lemma}[theorem]{Lemma}

\crefname{lemma}{lemma}{lemmas}
\Crefname{lemma}{Lemma}{Lemmas}
\crefname{theorem}{theorem}{theorems}
\Crefname{theorem}{Theorem}{Theorems}

\def\ddefloop#1{\ifx\ddefloop#1\else\ddef{#1}\expandafter\ddefloop\fi}
\def\ddef#1{\expandafter\def\csname #1\endcsname{\ensuremath{\mathbb{#1}}}}
\ddefloop ABCDFGHIJKLMNOQRTUVWXYZ\ddefloop  \def\ddef#1{\expandafter\def\csname c#1\endcsname{\ensuremath{\mathcal{#1}}}}
\ddefloop ABCDEFGHIJKLMNOPQRSTUVWXYZ\ddefloop  \def\ddef#1{\expandafter\def\csname b#1\endcsname{\ensuremath{\bm #1}}}
\ddefloop ABCDEFGHIJKLMNOPQRSTUVWXYZabcdeghijklnopqrstuvwxyz\ddefloop  

\DeclareMathOperator*{\Ex}{\mathbb{E}}

\newcommand{\DSBS}{\mathsf{DSBS}}

\newcommand{\SBES}{\mathsf{SBES}}
\newcommand{\SBESp}{\mathsf{SBES}_p}

\newcommand{\sign}{\mathsf{sign}}
\newcommand{\oS}{\overline{S}}
\newcommand{\bone}{\mathds{1}}

\newcommand{\Stab}{\mathsf{Stab}}
\newcommand{\Maj}{\mathsf{Maj}}
\newcommand{\Dict}{\mathsf{Dict}}

\renewcommand{\lor}{\mbox{ or }}
\renewcommand{\land}{\mbox{ and }}
\newcommand{\Bin}{\mathrm{Bin}}

\newcommand{\Ber}{\mathrm{Ber}}
\newcommand{\Paren}[1]{\left(#1\right)}
\newcommand{\Brack}[1]{\left[#1\right]}
\newcommand{\Abs}[1]{\left|#1\right|}

\newcommand{\counterf}{g_n}
\newcommand{\difff}{\delta_n}
\newcommand{\eps}{\varepsilon}
\newcommand{\bzero}{\mathbf{0}}

\renewcommand{\setminus}{\smallsetminus}
\newcommand{\bhole}{e}

\newcommand{\diffset}{E}

\newcommand{\mlsc}{\textrm{MLS}\xspace}
\newcommand{\nicdc}{\textrm{NICD-E}\xspace}

\allowdisplaybreaks

\title{
When Majority Fails:\\
Tight Bounds for Correlation Distillation Conjectures
}
\author{
    Pritish Kamath \\
    \texttt{pritish@alum.mit.edu}
    \and
    Ravi Kumar\\
    \texttt{ravi.k53@gmail.com}\\[3mm]
    Google Research
    \and
    Pasin Manurangsi\\
    \texttt{pasin@google.com}
}
\date{}

\begin{document}

\maketitle

\begin{abstract}
We study two conjectures posed in the analysis of Boolean functions $f : \{-1, 1\}^n \to \{-1, 1\}$, in both of which, the {\em Majority} function plays a central role: the ``Majority is Least Stable''~\citep{benjamini1999noise} and the ``Non-Interactive Correlation Distillation for Erasures''~\citep{Yan03,odonnell12new}.

While both conjectures have been refuted in their originally stated form, we obtain a nearly tight characterization of the noise parameter regime in which each of the conjectures hold, for all $n \ge 5$. Whereas, for $n=3$, both conjectures hold in all noise parameter regimes. We state refined versions of both conjectures that we believe captures the spirit of the original conjectures.

\end{abstract}

\section{Introduction}

The {\em Non-Interactive Correlation Distillation (NICD)} problem, parameterized by a joint distribution $P$ over $\cX \times \cY$, is to maximize for $(\bx, \by) \sim P^{\otimes n}$, the correlation $\Ex [f(\bx) \cdot g(\by)]$, subject to $\Ex[f(\bx)] = \Ex[g(\by)] = 0$ for functions $f : \cX^n \to [-1, 1]$ and $g : \cY^n \to [-1, 1]$. Let $\rho^\star_n(P)$ denote this optimal value, and let $\rho^\star(P) := \sup_n \rho^\star_n(P)$.
Informally, this corresponds to a two-player setting, where Alice receives a sequence $\bx = (x_1, \ldots, x_n) \in \cX^n$, Bob receives a sequence $\by = (y_1, \ldots, y_n) \in \cY^n$, where each $(x_i, y_i) \overset{\text{i.i.d.}}{\sim} P$. Each player is required to return a single bit, that is marginally uniform, while maximizing the probability that their returned bits agree.\footnote{The range of $f$ and $g$ is $[-1, 1]$, which can be interpreted as Alice returning a randomized rounding of $f(\bx)$ to $\{-1, 1\}$, and similarly Bob returning a randomized rounding of $g(\by)$.}

Specifically, we consider NICD for the following distributions (defined in \Cref{tab:dsbs-sbes}):
\begin{enumerate}
\item the {\em Doubly Symmetric Binary Source} ($\DSBS$) defined over $\{-1, 1\} \times \{-1, 1\}$, and
\item the {\em Symmetric Binary Erasure Source} ($\SBES$) defined over $\{-1, 1\} \times \{-1, 1, \star\}$.
\end{enumerate}

\subsection{\boldmath \texorpdfstring{$\DSBS$}{DSBS} and the ``Majority is Least Stable (\mlsc)'' Conjecture}
The {\em Doubly Symmetric Binary Source} $\DSBS_\rho$, parameterized by a given $\rho \in [0, 1]$, is a joint distribution $(x, y)$ over $\{-1, 1\} \times \{-1, 1\}$ given in \Cref{tab:dsbs-sbes}(a), where the marginal distributions are uniform, and the correlation between corresponding bits is $\Ex[xy] = \rho$.
It is folklore (see \Cref{sec:related}) that the dictator functions, namely $f(\bx) = \Dict(\bx) := x_1$ and $g(\by) = \Dict(\by) := y_1$, realize the optimal value of $\rho^\star(\DSBS_\rho)$ for all values of $\rho \in [0, 1]$.

\begin{table}[t]
\renewcommand{\arraystretch}{1.5}
  \centering
\begin{minipage}{0.3\textwidth}
    \centering
    \begin{tabular}{r|cc}
      \hline
      & $Y=1$ & $Y=-1$ \\
      \hline
      $X=1$ & $\frac{1+\rho}{4}$ & $\frac{1-\rho}{4}$ \\
      $X=-1$ & $\frac{1-\rho}{4}$ & $\frac{1+\rho}{4}$ \\
      \hline
    \end{tabular}
    \caption*{(a) $\text{DSBS}_\rho$}
  \end{minipage}
  \hspace{3mm} \begin{minipage}{0.45\textwidth}
    \centering
    \begin{tabular}{r|ccc}
      \hline
      & $Y=1$ & $Y=\star$ & $Y=-1$ \\
      \hline
      $X=1$ & $\frac{p}{2}$ & $\frac{1-p}{2}$ & $0$ \\
      $X=-1$ & $0$ & $\frac{1-p}{2}$ & $\frac{p}{2}$ \\
      \hline
    \end{tabular}
    \caption*{(b) $\text{SBES}_p$}
  \end{minipage}
  \caption{(a) Doubly Symmetric Binary Source and (b) Symmetric Binary Erasure Source.}
  \label{tab:dsbs-sbes}
\end{table}

The special case when $f = g$ is often studied through the notion of {\em noise stability}, $\Stab_{\rho}(f) := \Ex_{(\bx, \by)} [f(\bx) f(\by)]$ for $(\bx, \by) \sim \DSBS_\rho^{\otimes n}$~\citep[see][]{ryanbook}. So, in other words, the dictator function is the most noise stable.

\citet{benjamini1999noise} studied the noise stability of {\em linear threshold functions} (LTF), namely, functions of the form $f(\bx) = \sign(\sum_{i=1}^n w_i x_i)$, and they posed the question of which (unbiased) LTF has the least noise stability. They conjectured that for odd $n$, $\Maj_n(\bx) := \sign(\sum_{i=1}^n x_i)$ has the least noise stability among unbiased LTFs, for any $\rho \in (0, 1)$. This \emph{\mlsc} conjecture was also highlighted by \citet{filmus2014real}.

However, as stated, the \mlsc conjecture was shown to be false. \citet{gopi2013stability} obtained a counterexample for all odd $n \ge 5$, which~\citet{BiswasS23} later proved to be minimal by showing that the conjecture holds for $n \le 3$.\footnote{As mentioned by \citet{BiswasS23}, the counterexample for $n=5$ were also discovered by Noam Berger, Daniel Kane, Steven Heilman,  and~\citet{jain2017counterexample}.}
Gopi also showed a partial positive result: for any fixed $n$, the conjecture holds in a small neighborhood of $\rho$ around $1$.

\subsection{\boldmath \texorpdfstring{$\SBES$}{SBES} and the ``NICD for Erasures (\nicdc)'' Conjecture}

\citet{Yan03} studied the NICD problem for the {\em symmetric binary erasure source}. Parameterized by $p \in [0, 1]$, $\SBESp$ is a joint distribution $(x, y)$ over $\{-1, 1\} \times \{-1, 1, \star\}$ given in \Cref{tab:dsbs-sbes}(b), where the marginal distribution of $x$ is uniform, and $y$ is obtained by ``erasing'' $x$ with probability $1 - p$.

It is easy to see that for any choice of $f : \cX^n \to [-1, 1]$, the optimal $g^\star : \cY^n \to [-1, 1]$ that maximizes $\Ex [f(\bx) \cdot g(\by)]$ is given by $g^\star(\by) := \sign(\Ex [f(\bx) | \by])$. And thus, maximizing $\Ex [f(\bx) \cdot g(\by)]$ becomes equivalent to maximizing $\Phi_p(f)$, where
$$\Phi_p(f) := \Ex_{\by} \Brack{\big | \Ex [f(\bx) \mid  \by] \big |} \qquad \text{subject to} \qquad \Ex [f(\bx)] = 0.$$
Note that it does not follow that $\Ex[g^\star(\by)] = 0$. Following prior literature, we ignore this constraint and instead only require that $\Ex [f(\bx)] = 0$. Although, note that for {\em odd} $f$, namely when $f(\bx) = -f(-\bx)$ holds for all $\bx$, it follows that $g^\star$ is also unbiased.

For the low-erasure regime ($p > 1/2$),~\citet{odonnell12new} showed that the dictator function $\Dict(\bx) := x_1$ maximizes $\Phi_p(f)$. Furthermore, for the high-erasure regime ($p < 1/2$)
and odd $n$, they conjectured that the Majority function $\Maj_n$ maximizes $\Phi_p(f)$; this \emph{\nicdc} conjecture was also highlighted by \citet{filmus2014real}.

Recently,~\citet{ivanisvili2025nicd}  presented a counterexample\footnote{Interestingly, they discovered the counterexample with help of a large language model. While their stated counterexample $\sign(x_1 - 3x_2 + x_3 - x_4 + 3 x_5)$ appears highly asymmetric, it is equivalent (up to permutations and sign flips) to the counterexample discovered for the \mlsc conjecture, namely $\sign(2x_1 + 2x_2 + x_3 + x_4 + x_5)$.} refuting this conjecture for $n=5$ and $p=0.4$. Similar to~\citep{gopi2013stability}, they also provide a partial positive result showing that for all odd $n$, the conjecture holds in a small neighborhood of $p$ around $0$.

\subsection{Our Contributions}

In this work, we provide a unified understanding of both these seemingly unrelated conjectures, and nearly-tight characterization of the noise parameter regimes under which the respective conjectures are tight. Our main contributions are summarized as follows:

\paragraph{When the conjectures are false?}
We show, in \Cref{thm:main-counterexample}, that the counterexample for the \mlsc conjecture by \citet{gopi2013stability} also refutes the \nicdc conjecture for all $n \ge 5$. Furthermore, we show that this example refutes the \mlsc conjecture for all $\rho \in (0, 1 - \frac{8 + o(1)}{n^2})$ and refutes the \nicdc conjecture for all $p \in (\frac{4 + o(1)}{n^2}, \frac12)$.

\paragraph{When the conjectures are true?}
We complement the counterexamples by obtaining nearly matching positive bounds in \Cref{thm:main-example}: the \mlsc conjecture  holds for $\rho \in (1 - \frac{8 - o(1)}{n^2}, 1)$ and the \nicdc conjecture holds for $p \in (0, \frac{2 - o(1)}{n^2})$. Our result for the \mlsc conjecture is in fact stronger in that we show the \mlsc conjecture holds
for all unate\footnote{A Boolean function $f : \{-1, 1\}^n \to \{-1, 1\}$ is \emph{unate} if it is monotone, either increasing or decreasing, in each $x_i$.}
functions, of which LTFs are a special case. Additionally, the \nicdc conjecture when restricted to unate functions holds for all $p \in (0, \frac{4 - o(1)}{n^2})$.
Using our techniques, we also show that the conjectures hold for $n=3$ for the entire regime of parameters (\Cref{thm:n3}).\footnote{
\citet{BiswasS23} already showed
this for the \mlsc conjecture; see the proof of~\Cref{thm:n3}.
}

\subsection{Related Work \& Discussion}\label{sec:related}

For any distribution $P$ over $\cX\times \cY$, the maximal correlation coefficient $\rho(P)$, studied by~\citet{hirschfeld1935connection,gebelein1941statistische,Renyi1959measures} is the maximum value of $\Ex[f(x) g(y)]$ subject to $\Ex[f(x)] = \Ex[g(y)] = 0$ and $\Ex[f(x)^2] = \Ex[g(\by)^2] = 1$ over $f : \cX\to \R$ and $g : \cY \to \R$, where the expectation is over $(x, y) \sim P$. The difference between $\rho^\star(P)$ and $\rho(P)$ is that the range of $f$ and $g$ is $[-1, 1]$ in the former, but $\R$ in the latter.\footnote{Another difference is that in $\rho(P)$, the domain of $f$ is a single $x$ (similarly for $g$), however, $\rho(P)$ exhibits a ``tensorization property'', namely, $\rho(P) = \rho(P^{\otimes n})$.}
Thus, it follows that $\rho^\star(P) \le \rho(P)$. \citet{witsenhausen75sequences} provided a construction of $f$ and $g$ to show that
\begin{align}
\textstyle
\frac{2}{\pi} \arcsin(\rho(P)) ~\le~ \rho^\star(P) ~\le~ \rho(P).\label{eq:witsenhausen}
\end{align}
It is challenging to determine $\rho^*(P)$, and is not known exactly even for some simple distributions $P$.
Only recently, it was shown that $\rho^*(P)$ is even  computable~\citep{ghazi16decidability}; subsequently, this was extended to correlations over larger alphabet size~\citep{de17noise,de18noninteractive,ghazi18dimension}.

\paragraph{\boldmath $\DSBS_\rho$.} The maximal correlation $\rho(\DSBS_\rho) = \rho$ and thus, we have $\rho^*(\DSBS_\rho) \le \rho$. Moreover the dictator functions achieve equality thereby establishing their optimality.

Witsenhausen's construction underlying \eqref{eq:witsenhausen} is identical to $f = g = \Maj_n$, in the limit as $n \to \infty$, it holds that $\Stab_\rho(\Maj_n)$ approaches $\frac 2\pi \arcsin \rho$ from above~\citep{odonnell2003thesis}, and hence $\Stab_\rho(\Maj_n) \ge \frac 2\pi \arcsin \rho$ for all $n$.

Somewhat unrelated,~\citet{MOO10} introduced the {\em invariance principle} and showed that for functions $f$ with ``low influence'', it holds that $\Stab_\rho(f) \le \Stab_{\rho}(\Maj_n) + o_n(1)$, i.e., the \mlsc conjecture holds for   low-influence functions.

Given all the results thus far, a reasonable refined version of the \mlsc conjecture that we believe captures the spirit of the original formulation is as follows. This formulation was also stated by \cite{gopi2013stability}.

\begin{conjecture}[Refined \mlsc~\citep{gopi2013stability}]\label{conj:mils-refined}
For all $\rho \in (0, 1)$, odd $n$, and unbiased LTF $f : \{-1, 1\}^n \to \{-1, 1\}$, it holds that $\Stab_\rho(f) \ge \frac{2}{\pi} \arcsin(\rho)$.
\end{conjecture}

\noindent {\em Remark.} It is tempting to conjecture that the \mlsc conjecture holds for all unate functions, not just LTFs, given that our positive result shows this for $\rho \in (1 - O(1/n^2), 1)$. However, this turns out to be false. \cite{benjamini1999noise} showed that the noise stability of the so-called  Tribes function (a balanced monotone function) approaches $0$ as $n \to \infty$ for $\rho = 1 - \Omega(1/\log n)$.

\paragraph{\boldmath $\SBESp$.}
It is easy to see that the maximal correlation coefficient $\rho(\SBESp) = \sqrt{p}$. In this case, Witsenhausen's construction corresponds to $f = \Maj_n$, and so we get $\Phi_p(f)$ approaches $\frac 2\pi \arcsin(\sqrt{p})$ as $n \to \infty$. Furthermore, $\Phi_p(\Maj_n)$ approaches this limit from below when $p < 1/2$ (and approaches from above when $p > 1/2$).\footnote{This is easy to show, and is similar in spirit to Condorcet's Jury Theorem. We omit the proof.}

Somewhat unrelated, \citet{mossel2010gaussian} extended the invariance principle of \citet{MOO10} and showed, as an application, that for any low-influence  function $f$, it holds that $\Phi_p(f) \le \Phi_p(\Maj_n) + o_n(1)$.

Given all the results thus far, a reasonable refined version of the \nicdc conjecture that we believe captures the spirit of the original formulation is as follows---it is basically saying that for $p < 1/2$, Witsenhausen's construction is optimal for NICD for $\SBES_p$.

\begin{conjecture}[Refined \nicdc]\label{conj:nicd-erasures-refined}
For all $p \in (0, 1/2)$, odd $n$, and unbiased function $f : \{-1, 1\}^n \to \{-1, 1\}$, it holds that $\Phi_p(f) \le \frac{2}{\pi} \arcsin(\sqrt{p})$.
\end{conjecture}

\noindent {\em Remark.} Note however, that the formulation in terms of $\Phi_p$ is not equivalent to the NICD setting because $g(\by) := \sign(\Ex[f(\bx) | \by])$ need not satisfy $\Ex[g(\by)] = 0$. So a weaker variant, implied by the above conjecture, is that $\Ex[f(\bx) g(\by)] \le \frac{2}{\pi} \arcsin(\sqrt{p})$ whenever $\Ex [f(\bx)] = \Ex [g(\by)] = 0$.

\section{Correlation Distillation Conjectures}\label{sec:background}

Let $[n]$ denote $\{1, \ldots, n\}$.  For $\bx = (x_1, \ldots, x_n)$ and for $S \subseteq [n]$, let $\bx_{S}$ denote the subsequence of $\bx$ indexed by $S$ and let $\bone_S$ denote the all-ones vector with coordinates indexed by $S$.

\subsection{Majority is Least Stable (\mlsc)}\label{subsec:mils}

\begin{conjecture}[\mlsc,~\citet{benjamini1999noise}]
For any odd integer $n$, any LTF $f: \{-1, 1\}^n \to \{-1, 1\}$ and $\rho \in [0, 1]$, it holds that $\Stab_\rho[f] \geq \Stab_\rho[\Maj_n]$.
\end{conjecture}

\citet{gopi2013stability} proposed a counterexample to the conjecture for all odd $n \ge 5$. To define this counterexample, let $h = (n - 1)/2$, and partition $[n]$ into $A = [h + 1] = \{ 1, \ldots, h+1 \}$ and $B = [n] \setminus A = \{ h+2, \ldots, n \}$. Let $\bhole = \bone_A \circ -\bone_B$. Define $\counterf: \{-1, 1\}^n \to \{-1, 1\}$ as
\begin{align} \label{def:counter-exp}
\counterf(x) = \begin{cases}
- \Maj_n(x) & x \in \{ \bhole, -\bhole \}, \\
\Maj_n(x) & \text{ otherwise.}
\end{cases}
\end{align}
In other words, $\counterf$ is almost the same as $\Maj_n$ except at two antipodal points at the boundary that are negated. As noted in \citep{gopi2013stability}, $g_n$ is indeed an unbiased LTF, as it can be expressed as $\sign\Paren{a \cdot \sum_{i \in A} x_i + b \cdot \sum_{i \in B} x_i}$ for any odd positive integers $a, b$ such that $\frac{b}{a} \in \Paren{1 + \frac{1}{h}, 1 + \frac{1}{h - 2}}$. (For example, $a = 2h - 1 = n - 2$ and $b = 2h - 3 = n - 4$.) It was shown that this is indeed a counterexample to the \mlsc
conjecture, as stated more formally below.

\begin{theorem}[\cite{gopi2013stability}] \label{thm:gopi-negative}
For every odd $n \geq 5$, there is $\rho_n > 0$ such that $\Stab_\rho[\counterf] < \Stab_\rho[\Maj_n]$ for all $\rho \in (0, \rho_n)$.
\end{theorem}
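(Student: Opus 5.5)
We expand both stabilities in the Fourier basis, $\Stab_\rho[f] = \sum_{k} \rho^k W^k[f]$, where $W^k[f] = \sum_{|S|=k} \hat f(S)^2$. Since $\counterf$ and $\Maj_n$ are both odd, only odd levels contribute. For small $\rho$ the difference $\Stab_\rho[\counterf]-\Stab_\rho[\Maj_n]$ is dominated by its lowest-order nonzero term $\rho\,(W^1[\counterf]-W^1[\Maj_n])$. It therefore suffices to show $W^1[\counterf] < W^1[\Maj_n]$ strictly. Given this, the difference equals $\rho\,(W^1[\counterf]-W^1[\Maj_n]) + O(\rho^3)$, with the $O(\cdot)$ constant bounded by $2$ (total Fourier weight is $1$ for each function). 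So it is negative for all $\rho\in(0,\rho_n)$ with $\rho_n$ explicit, e.g.\ $\rho_n^2 = (W^1[\Maj_n]-W^1[\counterf])/2$.

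To compare level-one weights, write $\counterf = \Maj_n + \Delta$. Here $\Delta$ is supported on $\{\bhole,-\bhole\}$, with $\Delta(\bhole) = -2\Maj_n(\bhole) = -2$ (since $|A| = h+1 > |B|$, we have $\Maj_n(\bhole)=1$) and $\Delta(-\bhole)=2$. Hence
\[
\hat\Delta(\{i\}) = 2^{-n}\bigl(-2 e_i + 2(-e_i)\bigr) = -2^{2-n} e_i .
\]
The Fourier coefficient of the majority is $\hat{\Maj_n}(\{i\}) = c_n := \binom{n-1}{h}2^{1-n}$ for every $i$. Therefore $\hat\counterf(\{i\}) = c_n - 2^{2-n}e_i$, which equals $c_n - 2^{2-n}$ on $A$ and $c_n + 2^{2-n}$ on $B$. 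Summing the squares gives
\[
W^1[\counterf] - W^1[\Maj_n] = -2\,c_n\, 2^{2-n}\,(|A|-|B|) + n\,2^{4-2n} = 2^{4-2n}\Bigl(n - \tbinom{n-1}{h}\Bigr),
\]
using $|A|-|B|=1$ and $2c_n 2^{2-n} = 2^{4-2n}\binom{n-1}{h}$.

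The main point to verify is the strict inequality $\binom{n-1}{h} > n$ for odd $n\ge5$. For $n=5$ it reads $\binom{4}{2}=6>5$, and the central binomial coefficient grows exponentially while $n$ grows linearly, so a short induction finishes the argument. For $n=3$ one gets $\binom{2}{1}=2<3$, which is consistent with the conjecture holding there. I expect no real obstacle beyond bookkeeping the signs in $\hat\Delta$. Uniformity of the $O(\rho^3)$ bound is immediate from Parseval.
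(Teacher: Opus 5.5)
Your proof is correct, and it takes a different route from the paper. The paper does not reprove this statement separately. It obtains it as a special case of Theorem~\ref{thm:main-counterexample}(i), using the exact identity of Theorem~\ref{thm:counterexample-formula},
\[
2^{n-3}\bigl(\Stab_{1-2q}[\Maj_n]-\Stab_{1-2q}[\counterf]\bigr)=(1-2q)\Pr_{z,z'\sim\Bin(h,q)}[z=z']+q^n-(1-q)^n .
\]
That identity is derived combinatorially: write $\by=\bx^{\oplus S}$ for a $q$-biased random set $S$, use that $\Maj_n-\counterf$ is supported on $\pm\bhole$, reduce to $\Ex[(-1)^{\bone[n_A>n_B]}]$, and apply Lemma~\ref{prop:binom-comp}. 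The paper then lower-bounds the collision probability in Lemma~\ref{prop:binom-eq-lb}.

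You instead compare only the level-one Fourier weights. Your computations of $\hat\Delta(\{i\})$, of the majority's level-one coefficients, and of the gap $2^{4-2n}\bigl(\binom{n-1}{h}-n\bigr)$ are right. In fact this gap is exactly the coefficient of $\rho$ when the paper's identity is expanded around $\rho=0$ (that is, $q=1/2$), so the two computations agree. The induction for $\binom{2h}{h}>2h+1$ is also routine, since $\binom{2h+2}{h+1}/\binom{2h}{h}=2(2h+1)/(h+1)\ge 10/3$ for $h\ge 2$.

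Your route is shorter and needs only the level-one coefficients of $\Maj_n$ and Parseval. It also makes the $n=3$ versus $n\ge 5$ dichotomy near $\rho=0$ transparent: $\binom{2}{1}<3$, while $\binom{4}{2}>5$.

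What it cannot deliver is a large range. Bounding all higher levels crudely by $2\rho^3$ gives $\rho_n=\sqrt{D/2}$ with $D=\Theta(2^{-n}/\sqrt{n})$, so $\rho_n$ is exponentially small in $n$. The paper's exact identity holds for all $\rho$ and shows the gap is positive on all of $(0,1-2\eps_n)$. That range is nearly tight by Theorem~\ref{thm:main-example}, and the same formula also yields the $\Phi_p$ counterexample.

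For the theorem as stated, which only asks for some $\rho_n>0$, your argument is complete.
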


Later,~\citet{BiswasS23} show that the \mlsc conjecture holds for all $\rho \in (0, 1)$ when $n \leq 3$.
Thus, the above counterexample is minimal.
Additionally, they also independently discovered a different counterexample and proved a statement similar to \Cref{thm:gopi-negative}. Nevertheless, an interesting question remains: Could the counterexample be extended to hold for all $\rho \in (0, 1)$?  \citet{gopi2013stability} partially answered this question, by showing that, for any fixed $n$, there exists a small neighborhood of $\rho$ around 1 in which the \mlsc
conjecture holds:

\begin{theorem}[\cite{gopi2013stability}] \label{thm:gopi-positive}
For every odd $n$, there exists $\zeta_n > 0$ such that, for all $\rho \in [1 - \zeta_n, 1]$, it holds that $\Stab_\rho[f] \geq \Stab_\rho[\Maj_n]$ for all LTF $f: \{-1, 1\}^n \to \{-1, 1\}$.
\end{theorem}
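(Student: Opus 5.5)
Fix odd $n$. The plan is a perturbation argument around $\rho = 1$. For $f:\{-1,1\}^n\to\{-1,1\}$ we write $\Stab_\rho[f] = 1 - 2\Pr[f(\bx)\neq f(\by)]$ with $(\bx,\by)\sim \DSBS_\rho^{\otimes n}$. Set $\delta = (1-\rho)/2$, the probability that each coordinate is flipped. Then $\Pr[f(\bx)\ne f(\by)]$ is a polynomial in $\delta$ of degree at most $n$, with $\delta$-linear term $\delta\cdot \mathrm{I}(f)$. Here $\mathrm{I}(f)=\sum_i \Pr_{\bx}[f(\bx)\ne f(\bx^{\oplus i})]$ is the total influence. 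Consequently
\[
\Stab_\rho[f] = 1 - 2\delta\, \mathrm{I}(f) + O_n(\delta^2),
\]
and the hidden constant depends only on $n$, since the polynomial has bounded coefficients (each coefficient is bounded by $\binom{n}{k}2^{k}$, say).

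Step one is to invoke the classical fact that among all unate (in particular monotone-up-to-sign-flips) functions, and hence among all LTFs, $\Maj_n$ uniquely maximizes total influence up to sign flips of variables. This holds because for unate $f$ one has $\mathrm{I}(f)=\sum_i |\hat f(\{i\})|$, and by Cauchy--Schwarz/Khintchine the quantity $\sum_i |\hat f(i)| = \Ex[f(\bx)\sum_i \pm x_i] \le \Ex|\sum_i x_i|$, with equality only for $\sign(\sum_i \pm x_i)$. Sign flips of inputs preserve $\Stab_\rho$, so we may assume $f$ is monotone. Then either $f$ is $\Maj_n$ up to such flips, in which case $\Stab_\rho[f]=\Stab_\rho[\Maj_n]$ trivially, or $\mathrm{I}(f)\le \mathrm{I}(\Maj_n)-\gamma_n$ for some $\gamma_n>0$. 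The gap $\gamma_n$ exists because there are finitely many LTFs on $n$ variables and influences take values in $2^{-n+1}\Z$, so in fact $\gamma_n\ge 2^{-n+1}$.

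Step two compares the two expansions. For every non-majority $f$,
\[
\Stab_\rho[f]-\Stab_\rho[\Maj_n] \ge 2\delta\gamma_n - C_n\delta^2,
\]
where $C_n$ bounds the second-order remainder uniformly over all Boolean functions on $n$ bits. This difference is positive once $\delta<2\gamma_n/C_n$, so we take $\zeta_n = 4\gamma_n/C_n$ (capped at 1). At $\rho=1$ both sides equal $1$, so the endpoint is fine.

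The main point requiring care is the uniform control of the remainder, which is where I expect the obstacle to lie, though it is mild. Because there are only finitely many functions $f$, one can simply take the maximum over them of the coefficient bounds, or bound $\sum_{k\ge2}\binom nk \delta^k$ directly. The one substantive ingredient is the uniqueness of $\Maj_n$ as the total-influence maximizer among unate functions. The equality case needs that $\Ex[f\cdot \sum \pm x_i]=\Ex|\sum x_i|$ forces $f=\sign(\sum\pm x_i)$ wherever $\sum \pm x_i\neq 0$, which for odd $n$ is everywhere.
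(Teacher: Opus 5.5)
Your proof is correct. Its core matches the paper's quantitative version of this result, \Cref{thm:main-example}(i). You expand $\Stab_{1-2q}[f]-\Stab_{1-2q}[\Maj_n]$ in $q=(1-\rho)/2$ and note that the linear term is $2q\,(\mathrm{I}(\Maj_n)-\mathrm{I}(f))$. You then use that $\Maj_n$ maximizes $\sum_i \Ex[x_i f(\bx)]$ over monotone $f$. That step is just the pointwise bound $f(\bx)\sum_i x_i\le|\sum_i x_i|$, which is exactly \Cref{lem:maj-diff}; it is not really Cauchy--Schwarz or Khintchine, though the argument you wrote is right.

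The difference is in how the higher-order terms are handled.

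\textbf{Your route.} You bound the $O(q^2)$ remainder uniformly over all Boolean functions and beat it with the absolute discreteness gap $\mathrm{I}(\Maj_n)-\mathrm{I}(f)\ge 2^{-n+1}$. That gap can genuinely be that small: for Gopi's $\counterf$ it is $2^{-n+2}$. So the window $\zeta_n$ you obtain is exponentially small in $n$. That is enough for the existence statement, which is all this theorem asks.

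\textbf{The paper's route.} It writes the full expansion $\sum_k q^k(1-q)^{n-k}\sum_{|S|=k}G_S(f)$ and bounds every coefficient relative to $\mu_f=\Pr[f\ne\Maj_n]$:
\begin{itemize}
\item the $k=1$ part is at least $2c\mu_f$;
\item the $k=2$ part is at least $-c\frac{(n-1)^2}{2}\mu_f$, by splitting on $x_i=x_j$ versus $x_i\ne x_j$ and using monotonicity;
\item each $|S|\ge 3$ term is at least $-2c\mu_f$, because $G_S(f)$ can be negative only when $\bx$ or $\bx^{\oplus S}$ lies in $E_f$ (\Cref{lem:gap}).
\end{itemize}
Since the first-order gain and all the losses scale with $\mu_f$, $\mu_f$ cancels. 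This gives the explicit window $\rho>1-\frac{8}{(n+7)(n-1)}$, which nearly matches the counterexample threshold in \Cref{thm:main-counterexample}.

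\textbf{Trade-off.} Your argument is shorter and works uniformly for every odd $n$ and every unate $f$. The paper's \Cref{lem:qvalue} needs $n\ge 5$, with $n=3$ handled separately. What the paper's normalization by $\mu_f$ buys is the upgrade from a qualitative, exponentially small neighborhood of $\rho=1$ to an essentially tight $\Theta(1/n^2)$ regime.
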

\noindent However, this does not answer, e.g., if for a fixed $\rho = 0.99$, does there exist $n$ for which the 
\mlsc conjecture is false?

\subsection{NICD for Erasures (\nicdc)}

\begin{conjecture}[\nicdc,~\citet{odonnell12new}]
For any odd integer $n$, any unbiased $f$ and $p \in (0, 1/2)$, it holds that $\Phi_p[f] \leq \Phi_p[\Maj_n]$.
\end{conjecture}

Recently,~\citet{ivanisvili2025nicd} gave a (numerical) counterexample to this conjecture,
for a specific value of $n, p$:

\begin{theorem}[\cite{ivanisvili2025nicd}] \label{thm:ix-negative}
For $n = 5$ and $p = 0.4$, there is an unbiased $f: \{-1, 1\}^n \to \{-1, 1\}$ such that $\Phi_p[f] > \Phi_p[\Maj_n]$.
\end{theorem}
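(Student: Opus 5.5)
The plan is to verify the claim with $f = \counterf$ for $n=5$, i.e.\ $f(\bx) = \sign(3x_1+3x_2+x_3+x_4+x_5)$. This is the function flagged in the footnote as equivalent to the \citet{ivanisvili2025nicd} example. It is odd, hence unbiased. The work is to compare $\Phi_p(\counterf)$ with $\Phi_p(\Maj_5)$ directly by conditioning on the erasure pattern.

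Fix $\by$ and let $S\subseteq[n]$ be the set of non-erased coordinates. Then $\Ex[f(\bx)\mid\by]$ depends only on $\bx_S=\by_S$, and $\Pr[S]=p^{|S|}(1-p)^{n-|S|}$. Hence
\begin{align*}
\Phi_p(f)=\sum_{S\subseteq[n]} p^{|S|}(1-p)^{n-|S|}\,\Ex_{\bx_S}\Big|\Ex_{\bx_{\oS}}[f(\bx)]\Big| .
\end{align*}
Since $\counterf$ and $\Maj_5$ differ only at $\pm\bhole$, the inner term for a given $S$ changes only for restrictions $\bx_S$ consistent with $\bhole$ or $-\bhole$. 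By oddness the two cases contribute equally, so it suffices to track $\bx_S=\bhole_S$ and double.

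On that restriction, the conditional mean shifts by $-2\Maj_5(\bhole)\cdot 2^{-(n-|S|)}$. For $\bhole=(1,1,1,-1,-1)$ we have $\Maj_5(\bhole)=1$, so the shift is $-2^{1-(n-|S|)}$. The effect on the absolute value is then a finite case check, grouped by $|S|\in\{0,\dots,5\}$ and by how $S$ meets $A$ and $B$. The absolute value decreases where the majority's conditional mean was positive and large. It increases where that mean was $\le 0$ (or small positive), i.e.\ when $S$ contains enough of $B$ that $\bhole_S$ leans negative.

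Collecting the cases gives an explicit polynomial $D(p)=\Phi_p(\counterf)-\Phi_p(\Maj_5)$ of degree at most $5$ in $p$. The remaining step is to evaluate $D(0.4)$ and confirm it is positive. This is the main obstacle: $D$ is a signed sum of many small terms, and the sign must come out right. I will organize the computation as a table of $S$ types with $(|S\cap A|,|S\cap B|)$ fixed, multiplicities $\binom{3}{a}\binom{2}{b}$, and compute the majority conditional mean for each type exactly.

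If the gap at $p=0.4$ is slim, I will fall back to exact rational arithmetic, which is possible since every term is a dyadic rational times $p^k(1-p)^{5-k}$. That establishes the theorem with the explicit witness $\counterf$.
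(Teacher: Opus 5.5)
Your route is the paper's own: the $\Phi_p$ half of the proof of \Cref{thm:counterexample-formula}, specialized to $n=5$. It uses the same witness, the same decomposition over the non-erased set $S$, and the same reduction by oddness to the fiber $\bx_S=\bhole_S$. One slip: $\sign(3x_1+3x_2+x_3+x_4+x_5)$ disagrees with $\Maj_5$ at $\pm(1,1,-1,-1,-1)$, not at $\pm(1,1,1,-1,-1)$. With $A=\{1,2,3\}$, $B=\{4,5\}$, the function $g_5$ is $\sign(x_1+x_2+x_3+3x_4+3x_5)$. This is harmless by symmetry, but your table must use the matching $\bhole$.

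The real issue is that you leave the evaluation undone, and two of your stated rules fail at exactly the terms that decide the sign. For $\emptyset\ne S\subsetneq[5]$, let $n_A=|S\cap A|$ and $n_B=|S\cap B|$. The sum $\sum_{\bx_{\oS}}\Maj_5(\bhole_S\circ\bx_{\oS})$ is an even integer with the sign of $n_A-n_B$. So subtracting $2$ changes its absolute value by exactly $2(-1)^{\bone[n_A>n_B]}$. The two boundary sets behave differently:
\begin{itemize}
\item At $S=\emptyset$, the single fiber contains both $\bhole$ and $-\bhole$, so the two shifts cancel and you must not double. Both functions are unbiased, so the change is $0$.
\item At $S=[5]$, the value goes from $1$ to $-1$, so the absolute value is unchanged, not increased.
\end{itemize}
These two corrections are exactly the $-(1-p)^5$ and $+p^5$ terms in
\[
8\bigl(\Phi_p[g_5]-\Phi_p[\Maj_5]\bigr)=\Ex\bigl[(-1)^{\bone[n_A>n_B]}\bigr]+p^5-(1-p)^5,\qquad n_A\sim\Bin(3,p),\ n_B\sim\Bin(2,p)\text{ independent.}
\]

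At $p=2/5$ the expectation equals $241/3125$. You can get this by direct enumeration, or via \Cref{prop:binom-comp} as $(1-2p)\Pr[z=z']$ with $z,z'\sim\Bin(2,p)$. So the right-hand side is $(241+32-243)/3125=6/625$, and the gap is $3/2500>0$. The margin is thin enough that the boundary terms matter. Since $241<243$, the $S=[5]$ correction alone makes the gap positive. Treating $S=[5]$ by the generic rule ($n_A>n_B$, hence a drop of $2$) would have given the wrong sign. With the boundary cases handled as above, your plan is a complete proof. The paper's \Cref{prop:binom-comp} simply replaces your twelve-row table with a closed form valid for every odd $n$.
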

\noindent And similar to \Cref{thm:gopi-positive}, they showed that the \nicdc conjecture holds for $p$ close to 0:

\begin{theorem}[\cite{ivanisvili2025nicd}] \label{thm:ix-positive}
For any odd integer $n$, there exists $\gamma_n > 0$ such that, for all unbiased $f$ and $p \in [0, \gamma_n]$, it holds that $\Phi_p[f] \leq \Phi_p[\Maj_n]$.
\end{theorem}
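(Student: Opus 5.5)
The plan is to expand $\Phi_p$ as a polynomial in $p$ and compare first-order terms. For $(\bx,\by)\sim \SBESp^{\otimes n}$, let $S\subseteq[n]$ be the set of coordinates that are \emph{not} erased. Each coordinate survives independently with probability $p$, and conditioning on $\by$ is the same as conditioning on $S$ and $\bx_S$. Hence
\begin{align*}
\Phi_p(f) = \sum_{S\subseteq[n]} p^{|S|}(1-p)^{n-|S|}\, \Ex_{\bx_S}\Brack{\big|\Ex[f(\bx)\mid \bx_S]\big|}.
\end{align*}
The $S=\emptyset$ term is $|\Ex[f]|=0$ since $f$ is unbiased. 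For $S=\{i\}$ we have $\Ex[f\mid x_i]=\hat f(i)\,x_i$, so that term equals $|\hat f(i)|$. Every term with $|S|\ge 2$ has expectation at most $1$, and these terms together contribute at most $\sum_{k\ge 2}\binom nk p^k \le C_n p^2$ for $p\le 1$, uniformly in $f$. Therefore
\begin{align*}
\Phi_p(f) = p(1-p)^{n-1}\sum_{i=1}^n |\hat f(i)| + R_f(p), \qquad |R_f(p)|\le C_n p^2 .
\end{align*}

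Next I compare the first-order coefficients. Write $W(f):=\sum_i|\hat f(i)|$. Choosing signs $\sigma_i=\sign(\hat f(i))$, with $\sigma_i=1$ if $\hat f(i)=0$, gives
\begin{align*}
W(f)=\Ex\Brack{f(\bx)\textstyle\sum_i\sigma_i x_i}\le \Ex\Brack{\big|\textstyle\sum_i \sigma_i x_i\big|}=\Ex\Brack{\big|\textstyle\sum_i x_i\big|}=W(\Maj_n).
\end{align*}
Because $n$ is odd, $\sum_i\sigma_i x_i$ never vanishes. So equality holds iff $f(\bx)=\sign(\sum_i\sigma_i x_i)$ for some sign vector $\sigma$, i.e.\ $f$ is $\Maj_n$ composed with coordinate sign flips. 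For such $f$, the map $\bx\mapsto(\sigma_i x_i)_i$ is a bijection that preserves the distribution of $\bx$. Applying the matching sign flip to the unerased coordinates of $\by$ (erasures stay put) preserves the joint distribution. Hence $\Phi_p(f)=\Phi_p(\Maj_n)$ exactly for all $p$, and these $f$ are harmless.

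For every other unbiased Boolean $f$, the inequality $W(f)<W(\Maj_n)$ is strict. There are only finitely many Boolean functions on $\{-1,1\}^n$, so there is a uniform gap $\delta_n>0$ with $W(f)\le W(\Maj_n)-\delta_n$ for all non-equivalent unbiased $f$. This finiteness step is the one place that needs care; if one wanted $[-1,1]$-valued $f$, a compactness and stability argument would be required instead. Combining the two expansions gives
\begin{align*}
\Phi_p(\Maj_n)-\Phi_p(f)\ \ge\ p(1-p)^{n-1}\delta_n - 2C_n p^2 .
\end{align*}
This is nonnegative once $p\le\gamma_n$ for a suitable small $\gamma_n>0$ depending only on $n$; for instance take $\gamma_n\le 1/n$, so that $(1-p)^{n-1}\ge 1/e$, and $\gamma_n\le \delta_n/(2eC_n)$. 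This proves the theorem.

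The main obstacle is only to make the gap $\delta_n$ explicit if a quantitative $\gamma_n$ is desired. For the existential statement, finiteness suffices.
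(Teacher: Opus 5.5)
Your proof is correct, but it takes a different and cruder route than the paper. The paper does not reprove this cited result directly. It recovers an explicit version as \Cref{thm:main-example}(iii), with $\gamma'_n = \frac{2}{(n+2)(n-1)}$, and \Cref{thm:n3} covers $n=3$. Instead of expanding $\Phi_p[f]$ and $\Phi_p[\Maj_n]$ separately, the paper plugs the optimal $g^\star$ for $f$ into the objective for $\Maj_n$, giving $\Phi_p[\Maj_n]-\Phi_p[f]\ge \Ex_{\bx,S}\Brack{(\Maj_n(\bx)-f(\bx))\, g^\star(\bx_S\circ\bzero_{\oS})}$. After flipping coordinates so that $\Ex[x_i f(\bx)]\ge 0$ (your $\sigma$), two bounds follow. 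The singleton terms give a gain of at least $2\mu_f$ by \Cref{lem:maj-diff}, which is your inequality $W(f)\le W(\Maj_n)$ made quantitative. Each term with $|S|\ge 2$ loses at most $2\mu_f$ by \Cref{lem:small}, since the integrand vanishes off $E_f$. Because the gain and the loss both scale with $\mu_f$, no uniform gap is needed, and the threshold comes out polynomial in $n$. In your version the remainder is bounded by $C_np^2$ no matter how close $f$ is to $\Maj_n$, so you need finiteness to produce $\delta_n$. That gap is genuinely tiny: for Gopi's $g_n$ one gets $W(\Maj_n)-W(g_n)=2^{2-n}$. So any explicit $\gamma_n$ from your route is exponentially small in $n$. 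For the purely existential statement this is perfectly adequate, and your argument is shorter and self-contained. The coupling through $g^\star$ is what buys the $\Theta(1/n^2)$ threshold. It would also handle $[-1,1]$-valued $f$ directly, with $\mu_f$ replaced by $\frac12\Ex|\Maj_n(\bx)-f(\bx)|$, so the compactness argument you mention would not be needed.
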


\section{Our Results}
Our first result is to unify these two lines of work and show that Gopi's counterexample violates the \nicdc conjecture as well for all $n \geq 5$, and furthermore provides a strengthening of \Cref{thm:gopi-negative} by showing that the counterexample refutes
the \mlsc conjecture for almost all $\rho$ (i.e., $\rho < 1 - \Omega(1/n^2)$):
\begin{theorem} \label{thm:main-counterexample}
Let $\eps_n := \frac{4}{(n-3)^2 + 6} = \frac{4 + o(1)}{n^2}$.
There is a family $\{ g_n \}$ of unbiased LTFs such that the following holds for all odd integers $n \geq 5$:
\begin{enumerate}
\item[(i)] $\Stab_\rho[\counterf] < \Stab_\rho[\Maj_n]$ for all $\rho \in \Paren{ 0, 1 - 2\eps_n}$. 
\item[(ii)] $\Phi_p[\counterf] > \Phi_p[\Maj_n]$ for all $p \in \Paren{ \eps_n, \frac{1}{2} }$.
\end{enumerate}
\end{theorem}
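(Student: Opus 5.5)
The plan is to compute both differences exactly, using the fact that $\counterf$ differs from $\Maj_n$ only at the two antipodal points $\pm\bhole$. Since $\Maj_n(\bhole)=1$ ($A$ has $h+1$ ones and $B$ has $h$ minus-ones), write $\counterf=\Maj_n+D$ with $D=2(\bone_{\{-\bhole\}}-\bone_{\{\bhole\}})$. Let $T_\rho$ be the noise operator, so $\Stab_\rho[f]=\Ex[f\,T_\rho f]$. Self-adjointness of $T_\rho$ and oddness of $\Maj_n$ give
\begin{align*}
\Stab_\rho[\counterf]-\Stab_\rho[\Maj_n] = 2^{-n}\Big(-8\,T_\rho\Maj_n(\bhole) + 8\cdot 2^{-n}\big((1+\rho)^n-(1-\rho)^n\big)\Big).
\end{align*}
Let $P_\rho(\by)=\prod_i \frac{1+\rho y_i\bhole_i}{2}$ be the law of $\by$ $\rho$-correlated with $\bhole$. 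Then $T_\rho\Maj_n(\bhole)=\sum_{\by:\Maj_n(\by)=1}\big(P_\rho(\by)-P_\rho(-\by)\big)$, and the subtracted term is exactly the summand at $\by=\bhole$. So part (i) reduces to showing
\[
R(\rho):=\sum_{\by\neq \bhole,\ \Maj_n(\by)=1}\big(P_\rho(\by)-P_\rho(-\by)\big) > 0 \quad\text{for } \rho\in(0,1-2\eps_n).
\]

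To analyze $R$, I group the $\by$ by the number $a$ of agreements with $\bhole$ on $A$ and $b$ on $B$. Then $P_\rho(\by)=2^{-n}(1+\rho)^{a+b}(1-\rho)^{n-a-b}$, and $\Maj_n(\by)=1$ iff $a-b\ge 1$. This turns $R$ into an explicit double sum of binomial coefficients times $(1+\rho)^k(1-\rho)^{n-k}-(1-\rho)^k(1+\rho)^{n-k}$. I substitute $\rho=1-2t$ and divide by a suitable power of $t$ to obtain a polynomial in $t$ (or in $u=t/(1-t)$).

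The threshold $\eps_n=4/((n-3)^2+6)$ should come from the lowest-order terms in $u$. These are the $\by$ at Hamming distance $1$ and $2$ from $\pm\bhole$: flipping one $B$-coordinate of $\bhole$ keeps majority $+1$ ($h$ such points), flipping one $A$-coordinate does not, and the distance-$2$ points enter with mixed signs. I will compute these coefficients explicitly and check that $R$ changes sign exactly at $u$ corresponding to $\rho=1-2\eps_n$. Then I will prove there is only one positive root, for instance by showing the coefficient sequence has exactly one sign change (Descartes' rule of signs), or else by a monotonicity argument on $R(\rho)/(\text{leading term})$.

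For (ii) the structure is the same. A revealed set $S$ (each coordinate independently with probability $p$) together with $\bx_S$ determines the subcube average $\Ex[f\mid\by]$. Subcubes containing neither $\pm\bhole$ contribute equally for $\counterf$ and $\Maj_n$, and a subcube containing $\bhole$ has its average shifted by $-2\cdot 2^{-(n-|S|)}$ (with the antipodal case symmetric). Hence
\[
\Phi_p[\counterf]-\Phi_p[\Maj_n]=2\sum_{S}p^{|S|}(1-p)^{n-|S|}\Big(\big|m_S-2^{|S|-n+1}\big|-|m_S|\Big),
\]
where $m_S$ is the average of $\Maj_n$ on the subcube $\{\bx:\bx_S=\bhole_S\}$. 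I will split by $(|S\cap A|,|S\cap B|)$ and use the sign of $m_S$, which is determined by $|S\cap A|-|S\cap B|$. The gain terms (where $m_S\le 0$, or where $m_S>0$ and is large) compete with the loss terms, and the small-$p$ expansion again produces the threshold $\eps_n$.

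The main obstacle in both parts is the global step: proving that the resulting polynomial has a single sign change on the whole interval, rather than only near the endpoints. I would first try Descartes' rule of signs after the $u$-substitution, verifying the one-sign-change pattern of the coefficients via ratio inequalities between consecutive binomial sums.
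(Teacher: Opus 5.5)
There is a genuine gap at the global positivity step, and part (ii) also needs fixing.

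\textbf{Part (i).} Your reduction is correct and matches the paper's. With $q=(1-\rho)/2$ and $h=(n-1)/2$, you get $\Stab_\rho[\Maj_n]-\Stab_\rho[\counterf]=2^{3-n}R$, where $R=T_\rho\Maj_n(\bhole)-\Paren{(1-q)^n-q^n}$. The gap is the global step you flag yourself, and the plan you give for it does not work, for two reasons.

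First, $\eps_n$ is not where $R$ changes sign, so ``check that $R$ changes sign exactly at $\rho=1-2\eps_n$'' will fail. For $n=5$ one finds $R=(1-2q)\,q(1-q)\Paren{5q(1-q)-1}$, whose zero in $(0,\frac12)$ is $q^*=\frac12(1-1/\sqrt5)\approx 0.276$, while $\eps_5=0.4$. The constant $\eps_n$ is only a convenient sufficient threshold produced by a lossy estimate. It is also not the balance point of the lowest-order terms.

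Second, Descartes' rule cannot certify a unique root. Since $\Maj_n$ is odd, $T_{-\rho}\Maj_n=-T_\rho\Maj_n$; also $P_{-\rho}(\bhole)=P_\rho(-\bhole)$. Hence $R(-\rho)=-R(\rho)$, so $R$ vanishes at $\rho^*$, $0$ and $-\rho^*$. In $u=q/(1-q)$ these are $u^*$, $1$ and $1/u^*$. So any polynomial you obtain in $u$ (or in $t$) has at least three positive roots, hence at least three sign changes. For $n=5$, $R/(1-q)^5=-u+4u^2-4u^3+u^4=u(u-1)(u^2-3u+1)$. The fallback ``monotonicity of $R/(\text{leading term})$'' is not yet an argument.

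\textbf{The missing idea.} What you need is a closed form for $T_\rho\Maj_n(\bhole)$; after that you never need the exact root. Given $\bx=\bhole$, $\Maj_n(\by)=1$ iff the number of agreements on $A$ (distributed $\Bin(h+1,1-q)$) exceeds the number on $B$ (an independent $\Bin(h,1-q)$). Write the former as $\Bin(h,1-q)+\Ber(1-q)$ and symmetrize between the two i.i.d.\ $\Bin(h,1-q)$ variables. Everything cancels except ties, giving $T_\rho\Maj_n(\bhole)=(1-2q)\Pr_{w,w'\sim\Bin(h,q)}[w=w']$. Since $(1-q)^n-q^n=(1-2q)\sum_{j=0}^{2h}q^j(1-q)^{2h-j}$, this yields
\[
R=(1-2q)\Paren{\sum_{i=0}^{h}\binom{h}{i}^2q^{2i}(1-q)^{2h-2i}-\sum_{j=0}^{2h}q^j(1-q)^{2h-j}}.
\]
Now keep only $i\in\{0,1,2\}$, using $\binom{h}{2}^2\ge1$ to absorb the $j=4$ term. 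Bound each of the remaining $2h-3$ terms with $j\ge 3$ by $q^2(1-q)^{2h-2}$. This gives $R\ge(1-2q)\,q(1-q)^{2h-2}\Paren{((h-1)^2+2)q-1}$, which is positive for $q>4/((n-3)^2+8)$. That range contains $(\eps_n,\frac12)$.

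\textbf{Part (ii).} Your displayed formula has two errors. Each subcube must carry the weight $\Pr[\bx_S=\bhole_S]=2^{-|S|}$. Also, $S=\emptyset$ must be excluded: both $\pm\bhole$ lie in that single subcube, so nothing changes, whereas your formula adds a positive term.

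Your gain/loss description is also off. For $\emptyset\ne S\subsetneq[n]$ the subcube sum $M_S$ of $\Maj_n$ is even. So $|M_S-2|-|M_S|$ equals exactly $-2$ if $|S\cap A|>|S\cap B|$ and $+2$ otherwise; there is no gain when $m_S>0$.

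With $|S\cap A|\sim\Bin(h+1,p)$ and $|S\cap B|\sim\Bin(h,p)$, the same symmetrization gives $2^{n-2}\Paren{\Phi_p[\counterf]-\Phi_p[\Maj_n]}=R|_{q=p}$. So (ii) follows from the same bound with no separate analysis.
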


We complement this result and show almost matching bounds: (i) the \mlsc conjecture holds  all unate functions (not just LTFs) for $\rho$ close to 1 and (ii) the 
\nicdc conjecture holds for $p$ close to $0$, providing quantitative improvements over \Cref{thm:gopi-positive,thm:ix-positive} where an explicit bound on the regime of parameters is not provided.

\begin{theorem}
\label{thm:main-example}
Let $\gamma_n := \frac{4}{(n + 7)(n - 1)} = \frac{4 - o(1)}{n^2}$.  
For every odd $n \ge 5$ and any unate
unbiased $f$, we have the following:
\begin{enumerate}
\item[(i)] $\Stab_\rho[\Maj_n] \le \Stab_\rho[f]$ for all $\rho \in \left(1 - 2\gamma_n, 1\right)$. 
\item[(ii)] $\Phi_p[\Maj_n] \ge \Phi_p[f]$ for all $p \in \Paren{0, \gamma_n}$.
\end{enumerate}
Moreover, for every odd $n \ge 5$ and any (possibly non-unate) unbiased $f$, we have:
\begin{enumerate}
\item[(iii)] $\Phi_p[\Maj_n] \ge \Phi_p[f]$ for all $p \in \Paren{0, \gamma'_n}$ for $\gamma'_n := \frac{2}{(n+2)(n-1)} \geq \frac{2}{n^2}$.
\end{enumerate}
In all cases, the inequality is in fact strict as long as $f$ is not $\Maj_n$ (up to sign flips).\end{theorem}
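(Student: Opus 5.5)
The plan is to view all three statements as perturbation arguments around the extreme parameter value, where only the first-order term matters, and then to compare the linear gain of $\Maj_n$ with the total contribution of the higher-order terms. Throughout, "$f$ differs from $\Maj_n$" means after replacing $f$ by $f(s_1x_1,\dots,s_nx_n)$ for a suitable sign pattern $s$. Such sign flips leave both $\Stab_\rho$ and $\Phi_p$ unchanged and turn a unate $f$ into a monotone one. Let $D=\{x: f(x)\neq \Maj_n(x)\}$. Since $f$ and $\Maj_n$ are both unbiased, $|D|\ge 2$. Since $n$ is odd, $|\sum_i x_i|\ge 1$ everywhere.

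For (i), set $\delta=(1-\rho)/2$ and write $\Stab_\rho[f]=1-2\Pr[f(\bx)\neq f(\by)]$. Here $\Pr[f(\bx)\ne f(\by)]=\sum_{k\ge1}\delta^k(1-\delta)^{n-k}N_k(f)/2^n$, where $N_k(f)$ counts ordered pairs at Hamming distance $k$ with different values. We must show $\Pr_{\Maj}-\Pr_f>0$.

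The linear coefficient $N_1(f)/2^n$ is essentially the total influence. For monotone $f$ this equals $\sum_i\hat f(i)=\Ex[f(\bx)\sum_i x_i]\le \Ex|\sum_i x_i|$, with gap
\[
\Ex\Big[(\Maj_n-f)(\bx)\sum_i x_i\Big]=\frac{2}{2^n}\sum_{x\in D}\Big|\sum_i x_i\Big|\ \ge\ \frac{2|D|}{2^n}.
\]
I will then bound the higher coefficients $|N_k(\Maj_n)-N_k(f)|$ by something like $|D|\binom{n}{k}\cdot 2/2^n$, since changing one point affects at most $\binom nk$ pairs at distance $k$ per endpoint. A sharper bound may come from using monotonicity: the points in $D$ near the boundary of majority have only about $(n+1)/2$ relevant neighbors. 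Summing the geometric-type series in $\delta$ against the linear gain, which is proportional to $|D|$, gives a condition of the form $\delta\cdot(\text{quadratic in } n)<\text{const}$. I expect this condition to reduce to $\delta<\gamma_n$, i.e. $\rho>1-2\gamma_n$.

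For (ii) and (iii), expand over the revealed set: $\Phi_p[f]=\sum_{S}p^{|S|}(1-p)^{n-|S|}\,\Ex_{\bx_S}\big|\Ex[f\mid \bx_S]\big|$. The $S=\emptyset$ term vanishes by unbiasedness. For $|S|=1$ the term is $|\hat f(i)|$, and choosing $s_i=\sign\hat f(i)$ gives $\sum_i|\hat f(i)|=\Ex[f\sum_i s_ix_i]\le\Ex|\sum_i s_ix_i|$. This holds for arbitrary $f$, which is what makes (iii) possible without unateness. The linear gap is again at least $2|D|/2^n$ after the flip, with equality in the conjectured inequality only for $\Maj_n$.

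For $|S|\ge2$, changing $f$ on $D$ changes $\Ex_{\bx_S}|\Ex[f\mid\bx_S]|$ by at most $2|D|/2^n$. Multiplying by $p^{|S|}$ and summing over $|S|\ge 2$ gives a loss of order $|D|\,p^2\binom n2/2^n$ plus smaller terms. Comparing this with the gain $p(1-p)^{n-1}\cdot 2|D|/2^n$ should yield $p<\gamma'_n=2/((n+2)(n-1))$. In the unate case I will exploit monotonicity to halve the second-order loss: for a monotone $f$ and $|S|=2$, a change at one point can only move the conditional mean in a sign-consistent way on half of the restrictions. This improvement should give $\gamma_n$ in place of $\gamma'_n$. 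Strictness follows because the linear gap is strictly positive whenever $D\neq\emptyset$.

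The main obstacle is the higher-order accounting with exact constants. Crude bounds such as $\binom nk$ per changed point give the right $1/n^2$ order but probably not the precise $\gamma_n$, $\gamma'_n$. My first attempt will be to bound only the second-order term sharply, by counting exactly how many distance-2 pairs, or pairs of revealed coordinates, a boundary point of majority can influence. I will then absorb all terms with $k\ge3$ via $(1-\delta)^{n-k}\le 1$ and a tail estimate valid in the stated parameter range. If that is not tight enough, I would instead compare the derivatives in $\delta$ (or $p$) directly and argue that the difference is concave or convex on the interval.
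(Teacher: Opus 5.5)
Your framework matches the paper's. You expand in $q$ (your $\delta$ or $p$), extract a first-order gain of at least $2\mu_f$ from $\Ex[(\Maj_n-f)(\bx)\sum_i x_i]$ after sign flips, and charge each $S$ with $|S|\ge 2$ a loss of at most $2\mu_f$ (for $\Phi_p$) or $4\mu_f$ (for $\Stab$).

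For (iii) this is already a complete proof, the same as the paper's. The union bound $\sum_{k\ge2}\binom nk p^k(1-p)^{n-k}\le\binom n2p^2$ and Bernoulli's inequality turn $2\mu_f p(1-p)^{n-1}>2\mu_f\binom n2 p^2$ into exactly $p<\gamma'_n$.

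The gap is in (i) and (ii). The same crude accounting gives them only with $\gamma'_n=\frac{2}{(n+2)(n-1)}$ in place of $\gamma_n=\frac{4}{(n+7)(n-1)}$, and $\gamma'_n<\gamma_n$ for every $n\ge5$. The step meant to close this factor of about two is only an expectation in your proposal:
\begin{itemize}
\item Your heuristic that boundary points of $D$ have about $(n+1)/2$ relevant neighbours is a distance-one count. It does not control the distance-two term.
\item ``Sign-consistent on half of the restrictions'' is not an argument.
\item Neither is the fallback of convexity in $\delta$.
\end{itemize}

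The missing idea is to split each pair $S=\{i,j\}$ by whether $x_i=x_j$. Use the paper's normalization, where the first-order gain is $2c\mu_f q(1-q)^{n-1}$ with $c=2$ for $\Stab$ and $c=1$ for $\Phi_p$.
\begin{itemize}
\item When $x_i=x_j$, monotonicity makes the second-order contribution linear in the function. For $\Stab$, $|h(\bx)-h(\bx^{\oplus S})|=\frac{x_i+x_j}{2}\,(h(\bx)-h(\bx^{\oplus S}))$ for monotone $h$. For $\Phi_p$, a monotone unbiased $f$ has $\Ex[f\mid\bx_S]$ of sign $\sign(x_i+x_j)$, exactly like $\Maj_n$. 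The paper uses $f$'s optimal $g^\star$ as a test function for $\Maj_n$, and $g^\star=(x_i+x_j)/2$ there.
\item Summed over pairs, the $x_i=x_j$ part equals $c\,\frac{n-1}{2}\sum_k\Ex[x_k(\Maj_n-f)]\ge c(n-1)\mu_f$. This is a gain, not a loss.
\item When $x_i\ne x_j$, the crude bound is used. However, each $\bx\in D$ is now charged only for the at most $\frac{n+1}{2}\cdot\frac{n-1}{2}$ pairs with $x_i\ne x_j$, not for all $\binom n2$.
\end{itemize}
The net second-order coefficient becomes $-c\frac{(n-1)^2}{2}\mu_f$ instead of $-c\,n(n-1)\mu_f$. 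Together with the $2c\mu_f\binom n3 q^3$ tail for $|S|\ge3$, this gives $q<\gamma_n$. Without this estimate or an equivalent sharp second-order bound, your proposal proves (iii) and only weakened versions of (i) and (ii).
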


\noindent Finally, using similar techniques, we show that both conjectures hold unconditionally for $n=3$.
\begin{theorem}
\label{thm:n3}
The following statements hold:
\begin{enumerate}
\item $\Stab_\rho[\Maj_3] \le \Stab_\rho[f]$ holds for all unbiased LTF $f$ and $\rho \in (0, 1)$, and
\item $\Phi_p[\Maj_3] \ge \Phi_p[f]$ holds for all unbiased $f$ and $p \in (0, 1/2)$.
\end{enumerate}
In both cases, the inequality is in fact strict as long as $f$ is not $\Maj_3$ (up to sign flips).
\end{theorem}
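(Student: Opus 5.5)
For the $\SBES$ part (item 2), I would write $\Phi_p$ exactly as a polynomial in $p$ whose coefficients are simple Fourier and influence quantities of $f$. Item 1 I would handle by a finite enumeration (or by citing \citet{BiswasS23}).

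\textbf{Item 2: the formula.} Condition on the set $S\subseteq[3]$ of unerased coordinates; it appears with probability $p^{|S|}(1-p)^{3-|S|}$. For each size of $S$ I would compute $\Ex_{\bx_S}\big|\Ex[f(\bx)\mid \bx_S]\big|$.
\begin{itemize}
\item $S=\emptyset$: the value is $|\Ex f| = 0$, since $f$ is unbiased.
\item $S=\{i\}$: $\Ex[f\mid x_i] = \hat f(\emptyset)+\hat f(i)x_i = \hat f(i)x_i$, so the value is $|\hat f(i)|$.
\item $|S|=2$, say $S=[3]\setminus\{i\}$: the conditional mean averages $f$ over the two endpoints of an edge in direction $i$. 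It has absolute value $1$ if $f$ agrees on that edge and $0$ otherwise, so the value is $1-\mathrm{Inf}_i(f)$.
\item $S=[3]$: the value is $1$.
\end{itemize}
Summing, I expect
\[
\Phi_p(f) = p(1-p)^2\sum_{i}|\hat f(i)| + p^2(1-p)\bigl(3-\mathrm{I}(f)\bigr) + p^3 ,
\]
where $\mathrm{I}(f)=\sum_i \mathrm{Inf}_i(f)$. With $t=p/(1-p)\in(0,1)$ (this is where $p<1/2$ is used), it then suffices to show that $\sum_i|\hat f(i)| - t\,\mathrm{I}(f)$ is uniquely maximized by $\Maj_3$ up to sign flips.

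\textbf{Item 2: the two inequalities.} First, for Boolean $f$ we have $|\hat f(i)| = |\Ex[f(\bx)x_i]| \le \mathrm{Inf}_i(f)$, so $\mathrm{I}(f)\ge \sum_i|\hat f(i)|$. Second, choosing signs $s_i=\sign\hat f(i)$ gives
\[
\sum_i|\hat f(i)| = \Ex\Bigl[f(\bx)\sum_i s_ix_i\Bigr] \le \Ex\Bigl|\sum_i s_i x_i\Bigr| = 3/2 .
\]
Equality here holds iff $f(\bx)=\sign(\sum_i s_ix_i)$, i.e., $f$ is $\Maj_3$ up to sign flips. Combining the two,
\[
\sum_i|\hat f(i)| - t\,\mathrm{I}(f) \le (1-t)\sum_i|\hat f(i)| \le \tfrac32(1-t).
\]
For $\Maj_3$, $\sum_i|\hat f(i)| = \mathrm{I}(f)=3/2$, so this bound is attained. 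Strictness for any other $f$ comes from the equality case of the second inequality together with $1-t>0$.

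\textbf{Item 1.} Here I would first classify the unbiased LTFs on $3$ bits. Up to permutations and sign flips of inputs and output, I expect only $\Dict$ and $\Maj_3$ to occur. The self-dual threshold functions on $3$ bits are exactly these two, and I would check by enumerating threshold functions with exactly $4$ true points that no non-odd balanced LTF exists. This enumeration is the step I am least sure about, since any balanced non-odd threshold function I have missed would need a separate comparison. If the classification holds, compare via Fourier: $\Stab_\rho[\Dict]=\rho$ and $\Stab_\rho[\Maj_3]=\tfrac34\rho+\tfrac14\rho^3$, which is strictly smaller than $\rho$ for $\rho\in(0,1)$. Alternatively, item 1 can simply be cited from \citet{BiswasS23}.
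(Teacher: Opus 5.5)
Your proposal is correct, but for item 2 it takes a genuinely different route from the paper. The paper never computes $\Phi_p$ exactly. It lower-bounds $\Phi_p[\Maj_3]-\Phi_p[f]$ by scoring $\Maj_3$ against $f$'s optimal decoder $g^\star$, expands by the number of unerased coordinates, uses that the $k=3$ term vanishes, and reuses the $k=1,2$ estimates from its general gap lemma. To make those estimates apply to non-monotone $f$, it first needs a reduction lemma: after input sign flips, $g^\star(x_i\circ\bzero)=x_i$, and $g^\star=(x_i+x_j)/2$ on revealed pairs with $x_i=x_j$.

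Your route instead uses the exact identity $\Phi_p(f)=p(1-p)^2\sum_i|\hat f(i)|+p^2(1-p)(3-\mathrm{I}(f))+p^3$, combined with $|\hat f(i)|\le\mathrm{Inf}_i(f)$ and $\sum_i|\hat f(i)|\le\Ex|\sum_i s_ix_i|=3/2$. This is shorter, treats every unbiased Boolean $f$ uniformly, and gives the equality case for free. It also avoids a soft spot in the paper's reduction. Its last step replaces $-x_1x_2$ by $x_1x_2$, but $x_1x_2$ violates property (ii): its conditional mean given $x_1=x_2=-1$ is $+1$. So the parities $\pm x_ix_j$ actually need a separate check, which your formula supplies at once ($\Phi_p[\pm x_ix_j]=p^2$). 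What the paper's route buys is that $n=3$ appears as a specialization of the machinery used for general $n$. Your identity is specific to $n=3$: the $|S|=2$ level reduces to influences only because a single coordinate is erased.

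For item 1, the paper likewise avoids classifying LTFs. For monotone odd $f$ it notes $G_{[3]}(f)=0$, and the $k=1,2$ bounds of the gap lemma give a gap of at least $c\,\mu_f\,q(1-q)(2-4q)>0$. Your classification, which you flagged as uncertain, is true and quick to certify:
\begin{itemize}
\item After sign flips and a permutation making $w_1\ge w_2\ge w_3\ge 0$, the vertex $(1,-1,1)$ has $w\cdot x$ at least as large as every vertex outside $\{(1,1,1),(1,1,-1),(1,-1,1)\}$.
\item Also, $(-1,1,1)$ dominates $(-1,1,-1)$, $(-1,-1,1)$ and $(-1,-1,-1)$.
\item The true set of $\sign(w\cdot x-\theta)$ is upward closed in $w\cdot x$. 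So a true set of size four consists of $(1,1,1),(1,1,-1),(1,-1,1)$ plus either $(1,-1,-1)$ or $(-1,1,1)$. That is, the function is $\Dict$ or $\Maj_3$ up to sign flips and permutations.
\end{itemize}
Under the paper's definition, LTFs are $\sign(\sum_i w_ix_i)$ without a threshold, hence odd, so the worry about non-odd balanced LTFs does not arise anyway. Your comparison $\rho-(\tfrac34\rho+\tfrac14\rho^3)=\tfrac14\rho(1-\rho^2)>0$ then finishes item 1. This is exactly the case analysis the paper set out to avoid, but for $n=3$ it is only a few lines.
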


\noindent While the case of \mlsc for $n=3$ was also shown by \cite{BiswasS23}, their proof involved an exhaustive enumeration over functions on $3$ variables, and evaluating which ones are LTFs, whereas our proof is free of case analysis, and uses techniques developed to show \Cref{thm:main-example}.

\section{When the conjectures are false? (Proof of \texorpdfstring{\Cref{thm:main-counterexample}}{Theorem~\ref{thm:main-counterexample}})}

\subsection{Useful Properties of Binomials}

We start by a couple of useful lemmata on binomial distributions. First is a lemma that relates $\Bin(h+1, p)$ and $\Bin(h, p)$:
\begin{lemma} \label{prop:binom-comp}
For any $p \in [0, 1]$ and $h \in \N$, we have
\begin{align*}
\Ex_{\substack{u \sim \Bin(h + 1, p) \\ u' \sim \Bin(h, p)}}\Brack{(-1)^{\bone[u > u']}} = (1 - 2p) \Pr_{v, v' \sim \Bin(h, p)}[v = v'] = (1 - 2p) \Pr_{w, w' \sim \Bin(h, 1 - p)}[w = w']
\end{align*}
\end{lemma}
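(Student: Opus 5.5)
Write $u = v + z$, where $v \sim \Bin(h, p)$ and $z \sim \Ber(p)$ are independent of each other and of $u' \sim \Bin(h,p)$. This realizes $u \sim \Bin(h+1,p)$, and $v, u'$ are then i.i.d.\ $\Bin(h,p)$. I will condition on the comparison between $v$ and $u'$.

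\emph{Case $v > u'$.} Then $u \ge v > u'$ regardless of $z$, so the sign $(-1)^{\bone[u>u']}$ is $-1$.

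\emph{Case $v < u'$.} Since $z \le 1$, we get $u \le v+1 \le u'$, so the sign is $+1$.

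\emph{Case $v = u'$.} We have $u > u'$ exactly when $z=1$. The conditional expectation of the sign is therefore $(1-p)\cdot 1 + p\cdot(-1) = 1-2p$.

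By the symmetry between $v$ and $u'$, the events $\{v>u'\}$ and $\{v<u'\}$ have equal probability. Their contributions $-1$ and $+1$ cancel exactly. Hence
\[
\Ex\Brack{(-1)^{\bone[u>u']}} = (1-2p)\Pr[v=v'],
\]
which is the first equality.

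For the second equality, map $v \mapsto h - v$. This sends $\Bin(h,p)$ to $\Bin(h,1-p)$ and preserves the event $v=v'$. So $\Pr_{v,v'\sim\Bin(h,p)}[v=v'] = \Pr_{w,w'\sim\Bin(h,1-p)}[w=w']$.

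The only point requiring care is the case analysis, in particular checking that $v<u'$ forces $u\le u'$. This holds because the extra coordinate $z$ can raise $u$ above $v$ by at most $1$.
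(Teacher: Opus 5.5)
Your proof is correct and takes essentially the same route as the paper. You write $u = v + z$ with $z \sim \Ber(p)$, use exchangeability of $v$ and $u'$ to cancel the contributions from $v \neq u'$, and then reflect $v \mapsto h - v$ for the second equality; the only difference is that the paper does the cancellation by pointwise symmetrization, while you split into three cases and note that $\Pr[v>u'] = \Pr[v<u']$.
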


\begin{proof}
Write $u = v + m$ where $v \sim \Bin(h, p)$ and $m \sim \Ber(p)$ are independent. Since $u', v'$ are identically distributed, we have
\begin{align*}
\Ex_{\substack{u \sim \Bin(h + 1, p) \\ u' \sim \Bin(h, p)}}\Brack{(-1)^{\bone[u > u']}}
&= \Ex_{\substack{v, u' \sim \Bin(h, p) \\ m \sim \Ber(p)}}\Brack{(-1)^{\bone[v + m > u']}} \\
&= \Ex_{\substack{v, u' \sim \Bin(h, p) \\ m \sim \Ber(p)}}\Brack{\frac{1}{2}\Paren{(-1)^{\bone[v + m > u']} + (-1)^{\bone[u' + m > v]}}}.
\end{align*}
Notice that the term $(-1)^{\bone[v + m > u']} + (-1)^{\bone[u' + m > v]} = 0$ unless $u' = v$. Thus, the RHS equals
\begin{align*}
\Pr_{u', v'}[u' = v'] \cdot \Ex_{m \sim \Ber(p)}\Brack{(-1)^{\bone[m > 0]}} = \Pr_{u',v'}[u' = v'] \cdot (1 - 2p),
\end{align*}
which yields the first equality.

The second equality follows simply by substituting $w = h - v, w' = h - v'$.
\end{proof}

The second lemma gives a lower bound on the probability that two i.i.d. binomial random variables are equal. The form of the lower bound might look peculiar, but this is exactly what we will need subsequently.

\begin{lemma} \label{prop:binom-eq-lb}
For any $h \in \N$ such that $h \geq 2$ and any $q \in \Paren{\frac{1}{(h - 1)^2 + 2}, \frac{1}{2}}$, we have 
\begin{align*}
(1 - 2q)\Pr_{w, w' \sim \Bin(h, q)}[w = w'] + q^{2h + 1} - (1 - q)^{2h + 1} > 0.
\end{align*}
\end{lemma}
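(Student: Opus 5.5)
The plan is to reduce the inequality to comparing explicit monomials $q^i(1-q)^{2h-i}$. We only keep a few terms of
\[
P:=\Pr[w=w']=\sum_{k=0}^{h}\binom{h}{k}^2 q^{2k}(1-q)^{2h-2k}.
\]
The interval for $q$ splits into a ``small $q$'' and a ``large $q$'' regime, and each regime gets its own lower bound. Write $F(q):=(1-2q)P+q^{2h+1}-(1-q)^{2h+1}$; we must show $F(q)>0$.

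\emph{Small $q$.} Keep only the $k=0$, $k=1$ and $k=h$ terms of $P$. The $k=0$ term gives $(1-2q)(1-q)^{2h}=(1-q)^{2h+1}-q(1-q)^{2h}$. The $k=h$ term gives $(1-2q)q^{2h}$, which combines with $q^{2h+1}$ into $q^{2h}(1-q)>0$. Hence
\[
F(q)> (1-q)^{2h-2}\Bigl[(1-2q)h^2q^2-q(1-q)^2\Bigr],
\]
and it suffices that $(1-2q)h^2q\ge (1-q)^2$. This is equivalent to $(2h^2+1)q^2-(h^2+2)q+1\le 0$, i.e.\ $q$ lies between the two roots $r_-\le r_+$ of this quadratic. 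I would then check $r_-\le \frac{1}{(h-1)^2+2}$. Plugging $q=1/((h-1)^2+2)$ should give a nonpositive value; this is presumably exactly where the peculiar threshold comes from. For $h=2$ the quadratic is $(3q-1)^2$, so this regime only covers $q=1/3$ itself. There the strict inequality survives because of the dropped positive term $q^{2h}(1-q)$. This regime fails near $q=1/2$, since the quadratic equals $1/4$ at $q=1/2$.

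\emph{Large $q$.} Use the factorization
\[
(1-q)^{2h+1}-q^{2h+1}=(1-2q)\sum_{i=0}^{2h}q^i(1-q)^{2h-i}.
\]
Since $1-2q>0$, it suffices to show $P>\sum_{i=0}^{2h}q^i(1-q)^{2h-i}$. Bound each odd-index monomial by AM--GM:
\[
q^{2k+1}(1-q)^{2h-2k-1}\le\tfrac12\bigl(q^{2k}(1-q)^{2h-2k}+q^{2k+2}(1-q)^{2h-2k-2}\bigr).
\]
The right-hand sum then becomes $\sum_k c_k q^{2k}(1-q)^{2h-2k}$ with $c_0=c_h=3/2$ and $c_k=2$ otherwise. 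For $1\le k\le h-1$ we have $\binom hk^2\ge h^2\ge 4>2$, so only the endpoints $k=0$ and $k=h$ are in deficit, each by a factor $\frac12$. Charge the $k=0$ deficit to the $k=1$ surplus $(h^2-2)q^2(1-q)^{2h-2}$. This works provided $2(h^2-2)q^2\ge(1-q)^2$, i.e.\ $q\ge 1/(1+\sqrt{2(h^2-2)})$. Charge the $k=h$ deficit to the $k=h-1$ surplus; this is easier, since $q<1-q$ makes $q^{2h}$ the smaller monomial. (For $h=2$ both endpoints share the single middle term, so one should check that $h^2-2=2$ surplus suffices for both: it needs $q\ge 1/3$, which is exactly the range.)

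The main obstacle is the bookkeeping showing that the two regimes overlap, i.e.\ $r_+\ge 1/(1+\sqrt{2(h^2-2)})$ for all $h\ge 2$, together with the lower-endpoint check $r_-\le 1/((h-1)^2+2)$. Here $r_+\approx\frac12-O(1/h^2)$ while the second threshold is $\approx 1/(\sqrt2 h)$, so the overlap is comfortable for large $h$. I would verify small $h$ (say $h=2,3$) directly and handle the rest with a crude estimate on $r_+$, for instance by evaluating the quadratic at $q=1/(1+\sqrt{2(h^2-2)})$ and showing it is negative. If the overlap turns out to be delicate for some $h$, the fallback is to keep more terms of $P$ (e.g.\ $k=2$) in the small-$q$ bound.
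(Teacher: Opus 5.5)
\paragraph{Verdict.} Your argument works for $h\ge 3$, but the $h=2$ case, which you explicitly declare settled, fails on a subinterval, so as written the lemma is not proved for $h=2$.

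\paragraph{The case $h\ge 3$ is fine.} The quadratic $(2h^2+1)q^2-(h^2+2)q+1$ equals $q_0^2(-2h^3+7h^2-8h+4)<0$ at $q_0:=1/((h-1)^2+2)$. It equals $(9-2h^2)/16<0$ at $q=1/4$. Hence $[q_0,1/4]\subseteq[r_-,r_+]$. Also $1/(1+\sqrt{2(h^2-2)})\le 1/(1+\sqrt{14})<1/4$. So the two regimes together cover $(q_0,1/2)$.

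\paragraph{The gap at $h=2$.} Here the small-$q$ regime yields only the excluded point $q=1/3$. Everything therefore rests on the charging step, in which the single middle term must absorb \emph{both} endpoint deficits:
\[
2q^2(1-q)^2\ \ge\ \tfrac12\bigl((1-q)^4+q^4\bigr)\iff t^4-4t^2+1\le 0,\qquad t:=\tfrac{q}{1-q}.
\]
This requires $t\ge\sqrt{2-\sqrt3}\approx 0.518$, i.e.\ $q\gtrsim 0.341$. At $q=1/3$ the left side minus the right side is $-1/162$. The condition $q\ge 1/3$ you quote is what the $k=0$ deficit needs on its own. So for $h=2$ and $q\in(1/3,\,0.341)$ neither regime applies.

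\paragraph{The fix.} For $h=2$, skip AM--GM and compute exactly:
\[
P-\sum_{i=0}^{4}q^i(1-q)^{4-i}=3q^2(1-q)^2-q(1-q)^3-q^3(1-q)=q(1-q)(5q-5q^2-1).
\]
This is positive for $q>(5-\sqrt5)/10\approx 0.276$, which covers the whole range.

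\paragraph{Comparison with the paper.} The paper needs no regime split or special case. It keeps the $i=0,1,2$ terms of $P$, using $\binom h2\ge 1$. It cancels these against the $j=0,2,4$ monomials of $\sum_{j=0}^{2h}q^j(1-q)^{2h-j}$. It bounds each of the $2h-3$ remaining monomials with $j\ge 3$ by $q^3(1-q)^{2h-3}\le q^2(1-q)^{2h-2}$. This gives $F(q)\ge(1-2q)\,q(1-q)^{2h-2}\bigl((h^2-2h+3)q-1\bigr)$ uniformly in $h\ge 2$.

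That single bound is where the threshold $1/((h-1)^2+2)$ comes from. It is not the root $r_-$ of your quadratic, which is strictly smaller for $h\ge 3$. Your route gives a slightly wider range of $q$ for $h\ge 3$. The cost is the overlap bookkeeping and a separate treatment of $h=2$.
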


\begin{proof}
We have
\begin{align*}
\Pr_{w, w' \sim \Bin(h, q)}[w = w'] = \sum_{i=0}^{h} \Pr_w[w = i] \cdot \Pr_{w'}[w' = i]
&= \sum_{i=0}^h \Paren{\binom{h}{i} q^i (1 - q)^{h-i}}^2 \\
&\geq (1 - q)^{2h} + h^2 q^2(1 - q)^{2h-2} + q^{4}(1-q)^{2h-4},
\end{align*}
where the inequality follows by considering only 
the terms corresponding to $i \in \{0, 1, 2 \}$, and using the rough lower bound $\binom{h}{2} \geq 1$.

Using $q \leq 1/2$ and the above inequality, we can thus bound the desired quantity as follows:
\begin{align*}
&(1 - 2q)\Pr_{w, w' \sim \Bin(h, q)}[w = w'] + q^{2h + 1} - (1 - q)^{2h + 1} \\ &\geq (1 - 2q)\left((1 - q)^{2h} + h^2 q^2(1 - q)^{2h-2} + q^{4}(1 - q)^{2h-4} - \sum_{j=0}^{2h} q^j (1 - q)^{2h - j} \right) \\
&\geq (1 - 2q)\Paren{(h^2 - 1)q^2(1 - q)^{2h - 2} - q(1 - q)^{2h - 1} - (2h - 3)q^3(1 - q)^{2h - 3}} \\
&\geq (1 - 2q)\Paren{(h^2 - 2h + 2)q^2(1 - q)^{2h - 2} - q(1 - q)^{2h - 1}} \\
&= (1 - 2q)q(1 - q)^{2h - 2} \Paren{(h^2 - 2h + 3)q - 1} \\
& > 0,
\end{align*}
where the last inequality is due to our assumption that $q \in \Paren{\frac{1}{(h - 1)^2 + 2}, \frac{1}{2}}$.
\end{proof}

\subsection{Characterization of the Gaps}

We now establish a simple formula of the differences $\Phi_q[\counterf] - \Phi_q[\Maj_n]$ and $\Stab_{1 - 2q}[\Maj_n] - \Stab_{1 - 2q}[\counterf]$ based on binomial distributions. In fact, the two quantities are equal, up to an appropriate scaling, as stated below.

\begin{theorem} \label{thm:counterexample-formula}
For all odd $n \geq 3$ and $q \in (0, 1)$, it holds that
\begin{align*}
2^{n-2} \cdot (\Phi_q[\counterf] - \Phi_q[\Maj_n])
&= 2^{n-3} \cdot (\Stab_{1 - 2q}[\Maj_n] - \Stab_{1 - 2q}[\counterf]) \\
& = (1 - 2q)\Pr_{z, z' \sim \Bin\Paren{\frac{n - 1}{2}, q}}\left[z = z'\right] + q^n - (1 - q)^n.
\end{align*}
\end{theorem}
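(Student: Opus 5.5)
The plan is to compute both differences directly from the fact that $\counterf$ and $\Maj_n$ differ only at the two points $\pm\bhole$, and then reduce the surviving term to \Cref{prop:binom-comp}. Write $h=(n-1)/2$. Since $\bhole$ has $h+1$ ones, $\Maj_n(\bhole)=1$, so
\[
\counterf=\Maj_n+D, \qquad D:=-2\cdot\bone[x=\bhole]+2\cdot\bone[x=-\bhole].
\]

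\textbf{Stability.} Let $\rho=1-2q$, so $\by$ is obtained from $\bx$ by flipping each coordinate independently with probability $q$. Expanding,
\[
\Stab_\rho[\counterf]=\Stab_\rho[\Maj_n]+2\Ex[\Maj_n(\bx)D(\by)]+\Ex[D(\bx)D(\by)].
\]
\begin{itemize}
\item \emph{Cross term.} By oddness of $\Maj_n$, $\Ex[\Maj_n(\bx)D(\by)]=-2^{2-n}\,\Ex[\Maj_n(\bx)\mid\by=\bhole]$.
\item \emph{Quadratic term.} For $\by=\bhole$, $D(\bx)D(\by)=4$ when $\bx=\bhole$ (probability $(1-q)^n$) and $-4$ when $\bx=-\bhole$ (probability $q^n$). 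The case $\by=-\bhole$ is symmetric, so $\Ex[D(\bx)D(\by)]=2^{3-n}\big((1-q)^n-q^n\big)$.
\end{itemize}
Hence
\[
2^{n-3}\big(\Stab_\rho[\Maj_n]-\Stab_\rho[\counterf]\big)=\Ex[\Maj_n(\bx)\mid\by=\bhole]+q^n-(1-q)^n.
\]
It remains to evaluate $\Ex[\Maj_n(\bx)\mid\by=\bhole]$. Let $u\sim\Bin(h+1,1-q)$ count the unflipped coordinates in $A$, and let $w\sim\Bin(h,1-q)$ count the unflipped coordinates in $B$. Then $\Maj_n(\bx)=1$ iff $u>w$, so $\Ex[\Maj_n(\bx)\mid\by=\bhole]=-\Ex[(-1)^{\bone[u>w]}]$. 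Applying \Cref{prop:binom-comp} with $p=1-q$ (second form) gives $(1-2q)\Pr_{z,z'\sim\Bin(h,q)}[z=z']$, as required.

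\textbf{Erasures.} Let $S$ be the revealed set; each coordinate is revealed independently with probability $q$, and let $k=n-|S|$. The difference $\Ex[\counterf\mid\by]-\Ex[\Maj_n\mid\by]$ is nonzero only when $\by_S=\pm\bhole_S$ and $1\le|S|\le n-1$. When $S=\emptyset$, both erased-point contributions cancel; when $S=[n]$, flipping the sign does not change the absolute value.

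Take $\by_S=\bhole_S$. Then $\Ex[D\mid\by]=-2^{1-k}$, while $2^k\Ex[\Maj_n\mid\by]$ is an even integer. Consequently $|\Ex[\counterf\mid\by]|-|\Ex[\Maj_n\mid\by]|$ equals $-2^{1-k}$ if $\Ex[\Maj_n\mid\by]>0$ and $+2^{1-k}$ otherwise. This evenness check is the step I expect to need the most care, since it alone guarantees that the absolute value changes by exactly $\pm2^{1-k}$ with no partial cancellation.

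The revealed sum is $s=a-b$, where $a\sim\Bin(h+1,q)$ and $b\sim\Bin(h,q)$ count the revealed coordinates in $A$ and $B$. Since $\Ex[\Maj_n\mid\by]>0$ iff $s>0$, the sign above is $(-1)^{\bone[a>b]}$. Weight each such $\by$ by its probability $q^{|S|}(1-q)^k2^{-|S|}$ and double for $-\bhole$. Then $2^{n-2}(\Phi_q[\counterf]-\Phi_q[\Maj_n])$ equals $\Ex_{a,b}[(-1)^{\bone[a>b]}]$ minus the spurious boundary terms. The term $S=\emptyset$ would contribute $+(1-q)^n$ (there $a=b=0$) but truly contributes $0$. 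The term $S=[n]$ would contribute $-q^n$ (there $a=h+1>b=h$) but truly contributes $0$. Correcting both gives
\[
2^{n-2}\big(\Phi_q[\counterf]-\Phi_q[\Maj_n]\big)=\Ex_{a,b}\big[(-1)^{\bone[a>b]}\big]-(1-q)^n+q^n,
\]
and the first form of \Cref{prop:binom-comp} with $p=q$ turns the expectation into $(1-2q)\Pr_{z,z'\sim\Bin(h,q)}[z=z']$. This matches the stability computation and proves the theorem.
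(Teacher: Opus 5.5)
Your proposal is correct and takes essentially the same route as the paper. You isolate the correction supported on $\pm\bhole$, compute the quadratic and cross terms for $\Stab$, restrict the $\Phi$ computation to $\by_S=\pm\bhole_S$ with the boundary corrections at $S=\emptyset$ and $S=[n]$, and finish with \Cref{prop:binom-comp}. Your sign bookkeeping, $\Maj_n(\bx)=-(-1)^{\bone[u>w]}$ combined with the lemma at $p=1-q$, and your explicit parity argument for the exact $\pm 2^{1-k}$ jump are more careful than the paper's write-up, which makes two sign slips at that step that happen to cancel.
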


\noindent Note that \Cref{thm:main-counterexample} immediately follows from \Cref{thm:counterexample-formula} and \Cref{prop:binom-eq-lb} (using $h = \frac{n-1}{2}$).

\begin{proof}[Proof of \Cref{thm:counterexample-formula}]
We will separately derive the formula for $\Stab$ and $\Phi$. Let $\delta_n : \{-1, 1\}^n \to \{-2, 0, 2\}$ be given as $\delta_n(\bx) = \Maj_n(\bx) - \counterf(\bx)$. From definition of $\counterf$ \eqref{def:counter-exp}, it follows that $\delta_n(e) = 2$, $\delta_n(-e) = -2$ and $\delta_n(\bx) = 0$ for all other $\bx$.

\paragraph{\boldmath Derivation for $\Stab_\rho$.}
Let $\rho = 1 - 2q$. For $(\bx, \by) \sim \DSBS_\rho$, we can write the difference between stabilities as
\begin{align*}
\Stab_\rho[\Maj_n] -  \Stab_\rho[\counterf] &= \Ex\Brack{ \Maj_n(\bx)\Maj_n(\by) - \counterf(\bx)\counterf(\by) } \\
&= \Ex\Brack{ \Maj_n(\bx)\Maj_n(\by) - \Paren{ \Maj_n(\bx) - \difff(\bx) } \Paren{ \Maj_n(\by) - \difff(\by) } } \\
&= 2 \Ex[\difff(\bx) \Maj_n(\by)] - \Ex[\difff(\bx) \difff(\by)].
\end{align*}
Since $\difff$ is only supported at $\bhole$ and $-\bhole$, it is easy to see that
\begin{align*}
\Ex[\difff(\bx) \difff(\by)] = \frac{1}{2^{n-3}} \Paren{ \Paren{ \frac{1 + \rho}{2} }^n - \Paren{ \frac{1 - \rho}{2} }^n } = \frac{1}{2^{n-3}}\Paren{(1 - q)^n - q^n}.
\end{align*}
To compute the first expectation, using the fact that both $\difff$ and $\Maj_n$ are odd functions, we have
\begin{align*}
\Ex[\difff(\bx) \Maj_n(\by)] &= \Ex[\difff(\bx) \Maj_n(\by) \mid x_1 = 1] = \frac{1}{2^{n-2}} \Ex[\Maj_n(\by) \mid \bx = \bhole]\,,
\end{align*}
where the second inequality is due to the fact that, conditioned on $x_1 = 1$, $\difff(\bx)$ is only non-zero when $\bx = \bhole$. Now, let $n_A = |\{i \in A \mid y_i = 1\}|$ and  $n_B = |\{i \in B \mid y_i = -1\}|$. 
We have that $\Maj_n(\by) = (-1)^{\bone[n_A > n_B]}$. Combining all these, we have
\begin{align*}
\Stab_\rho[\Maj_n] -  \Stab_\rho[\counterf] &= \frac{1}{2^{n-3}} \Paren{ \Ex\Brack{(-1)^{\bone[n_A > n_B]}} + q^n - (1 - q)^n. }
\end{align*}
Notice that, conditioned on $\bx = e$, we have $n_A \sim \Bin\Paren{h + 1, \frac{1 + \rho}{2}} = \Bin\Paren{h+1, 1 - q}$ and $n_B \sim \Bin\Paren{h, \frac{1 + \rho}{2}} = \Bin\Paren{h, 1 - q}$. Thus, from \Cref{prop:binom-comp}, we have
\begin{align*}
\Stab_\rho[\Maj_n] -  \Stab_\rho[\counterf] = \frac{1}{2^{n-3}}\Paren{(1 - 2q)\Pr_{w, w' \sim \Bin(h, q)}[w = w'] + q^n - (1 - q)^n},
\end{align*}
as desired.

\paragraph{\boldmath Derivation for $\Phi_p$.}
We are interested in $\Phi_p[\counterf] - \Phi_p[\Maj_n]$ for $p = q$. Let $S \subseteq [n]$ be generated by including each element of $[n]$ independently with probability $p$, and let $\bx_S \in \{-1, 1\}^S$ be uniformly at random. We also write $\oS$ to denote $[n] \setminus S$. Note that $\Phi_p$ can be written as
\begin{align*}
\Phi_{p}[f] 
:= 
\Ex_{S} \left[ \Ex_{\bx_S} \left[ \left | \Ex_{\bx_{\bar{S}}} [f(\bx)] \right | \right] \right]
\end{align*}
Thus, we can write the desired quantity as
\begin{align*}
\Phi_p[\counterf] - \Phi_p[\Maj_n] &= \Ex_S \Ex_{\bx_S} \Brack{\frac{1}{2^{n - |S|}} \cdot h_S(\bx_S)},
\end{align*}
where
\begin{align*}\textstyle
h_S(\bx_S) := \Abs{\sum_{\bx_{\oS} \in \{-1, 1\}^{\oS}} \ \counterf(\bx_S \circ \bx_{\oS})} - \Abs{\sum_{\bx_{\oS} \in \{-1, 1\}^{\oS}} \ \Maj_n(\bx_S \circ \bx_{\oS})}.
\end{align*}
For notational convenience, write $\bhole_S$ as the restriction of $\bhole$ on $S$.
Now, if $\bx_S \notin \{\bhole_S, -\bhole_S\}$, then we immediately have $\counterf(\bx_S \circ \bx_{\oS}) = \Maj_n(\bx_S \circ \bx_{\oS})$, and $h_S(\bx_S) = 0$. Furthermore, notice that if $S = \emptyset$, we always have $h_S(\bx_S) = 0$. Thus, the above term can be written as
\begin{align}
\Phi_p[\counterf] - \Phi_p[\Maj_n]  &= \Ex_S \Brack{\frac{1}{2^{n}} \cdot \Paren{h_S\Paren{\bhole_S} + h_S\Paren{-\bhole_S}}} = \frac{1}{2^{n-1}} \Ex_S\Brack{h_S(\bhole_S)}, \label{eq:diff-simplified}
\end{align}
where in the second equality, we use the fact that $\counterf$ and $\Maj_n$ are odd and thus $h_S$ is even.

Finally, to determine the value of $h_S(\bhole_S)$, observe that, unless $S = \emptyset$, we have $$\sum_{\bx_{\oS} \in \{-1, 1\}^{\oS}} \counterf(\bhole_S \circ \bx_{\oS}) = -2 + \sum_{\bx_{\oS} \in \{-1, 1\}^{\oS}} \Maj_n(\bhole_S \circ \bx_{\oS}).$$ This means that $h_S(\bhole_S)$ is completely determined by the sign of $\sum_{\bx_{\oS} \in \{-1, 1\}^{\oS}} \Maj_n(\bx_S \circ \bx_{\oS})$:
\begin{align*}
h_S(\bhole_S) &= 
\begin{cases}
0 &\text{ if } S = \emptyset \text{ or } S = [n],\\
-2 &\text{ if } \sum_{\bx_{\oS} \in \{-1, 1\}^{\oS}} \Maj_n(\bhole_S \circ \bx_{\oS}) > 0,\\
2 &\text{ otherwise.} 
\end{cases}
\end{align*}
Let $n_A := |S \cap A|, n_B := |S \cap B|$.
Finally, notice that $\Maj_n(\bhole_S \circ \bx_{\oS})$ is simply $\sign\Paren{n_A - n_B + \sum_{i \in \oS} x_i}$.
Thus, we have
\begin{align*}
h_S(\bhole_S) &= 
\begin{cases}
0 &\text{ if } S = \emptyset \text{ or } S = [n],\\
2 \cdot (-1)^{\bone[n_A > n_B]} &\text{ otherwise.} 
\end{cases}
\end{align*}
Plugging this back into \eqref{eq:diff-simplified}, we have
\begin{align*}
\Phi_p[\counterf] - \Phi_p[\Maj_n] = \frac{1}{2^{n - 2}} \Paren{\Ex\Brack{(-1)^{\bone[n_A > n_B]}} + p^n - (1 - p)^n}.
\end{align*}
Notice that $n_A \sim \Bin(h+1, p), n_B \sim \Bin(h, p)$ are independent. Thus, applying \Cref{prop:binom-comp}, we get
\begin{align*}
\Phi_p[\counterf] - \Phi_p[\Maj_n] = \frac{1}{2^{n - 2}} \Paren{(1 - 2p) \Pr_{w, w' \sim \Bin(h, p)}[w = w'] + p^n - (1 - p)^n}. & &\qedhere
\end{align*}
\end{proof}

\section{When the conjectures are true? (Proof of \texorpdfstring{\Cref{thm:main-example}}{Theorem~\ref{thm:main-example}})}
\label{sec:tight}

For any function $f: \{-1, 1\}^n \to \{-1, 1\}$, let $\diffset_f := \{\bx \in \{-1, 1\}^n \mid f(\bx) \ne \Maj_n(\bx)\}$ and $\mu_f := \Pr_{\bx \sim \{-1, 1\}^n}[f(\bx) \ne \Maj_n(\bx)] = \frac{|E_f|}{2^n}$.

\subsection{Useful Lemmata on Gaps}

We first obtain a lower bound on the difference
between the first-level Fourier coefficients of 
Majority and $f$ in terms of their disagreement.
We state the lemma below without defining Fourier coefficients for simplicity.
\begin{lemma} 
\label{lem:maj-diff}
For any function $f: \{-1, 1\}^n \to \{-1, 1\}$, it holds that
$$\sum_{i \in [n]} \ \Ex_{\bx \sim \{-1, 1\}^n}\Brack{x_i \cdot (\Maj_n(\bx) - f(\bx))} \geq 2 \cdot \mu_f.$$
\end{lemma}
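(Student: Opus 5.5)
The plan is to rewrite the left-hand side pointwise. Since $\sum_i x_i (\Maj_n(\bx) - f(\bx))$ vanishes outside $\diffset_f$, we get
\begin{align*}
\sum_{i \in [n]} \Ex_{\bx}\Brack{x_i (\Maj_n(\bx) - f(\bx))} = \frac{1}{2^n} \sum_{\bx \in \diffset_f} \Paren{\sum_{i \in [n]} x_i} (\Maj_n(\bx) - f(\bx)).
\end{align*}
For $\bx \in \diffset_f$ we have $f(\bx) = -\Maj_n(\bx)$, so the summand is $2\Maj_n(\bx)\sum_i x_i = 2\Abs{\sum_i x_i}$. Here we use that $\Maj_n(\bx) = \sign(\sum_i x_i)$.

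The key observation is then that $n$ is odd. This is the parity assumption implicit throughout, since $\Maj_n$ is only defined for odd $n$. Because $n$ is odd, $\sum_i x_i$ is an odd integer, so $\Abs{\sum_i x_i} \ge 1$ for every $\bx$. Each point of $\diffset_f$ therefore contributes at least $2$, and the whole sum is at least $\frac{2|\diffset_f|}{2^n} = 2\mu_f$, which is the claim.

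There is essentially no obstacle. The only points needing care are the sign bookkeeping, namely that $(\Maj_n(\bx) - f(\bx)) = 2\Maj_n(\bx)$ on $\diffset_f$, and noting that $\sum_i x_i$ is never $0$, which is why the bound holds with constant $2$ and no loss. If one wanted slack for later use, the same computation also shows the inequality is tight exactly when $\diffset_f$ lies on the boundary layers $\Abs{\sum_i x_i} = 1$.
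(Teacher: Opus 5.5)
Your proof is correct and follows the same argument as the paper. Both write the left side as $\Ex_{\bx}[(\sum_i x_i)(\Maj_n(\bx) - f(\bx))]$, note that the integrand vanishes off $\diffset_f$, and use that $\sum_i x_i$ is a nonzero (odd) integer with the same sign as $\Maj_n(\bx)$, so each point of $\diffset_f$ contributes at least $2$.
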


\begin{proof}
We have
\begin{align*}
\sum_{i \in [n]} \Ex_{\bx \sim \{-1, 1\}^n}\Brack{x_i \cdot (\Maj_n(\bx) -  f(\bx))} 
&= \Ex_{\bx \sim \{-1, 1\}^n}\Brack{\Paren{\sum_{i \in [n]} x_i} \cdot (\Maj_n(\bx) - f(\bx))} \\
&\geq 2 \cdot \Pr_{\bx \sim \{-1, 1\}^n}\Brack{\bx \in \diffset_f} = 2 \cdot \mu_f,
\end{align*}
where in the inequality we use the fact that $\Paren{\sum_{i \in [n]} x_i}$ and $\Maj_n(x)$ have the same sign, 
(and that the former is a non-zero integer), 
which means that $\Paren{\sum_{i \in [n]} x_i} \cdot (\Maj_n(x) - f(x)) \geq 2$ if $f(x) \ne \Maj_n(x)$.
\end{proof}
We next prove a simple algebraic fact that we will use later.
\begin{lemma}
\label{lem:qvalue}
For $n \geq 5$, 
\[
q < \frac{4}{(n+7)(n-1)}
\implies
\left( 2(1-q)^{n-1} - q(1-q)^{n-2}\frac{(n-1)^2}{2} - q^2 \cdot 2\binom{n}{3} \right) > 0.
\]
\end{lemma}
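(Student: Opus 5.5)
Divide the expression by the positive factor $(1-q)^{n-3}$; the claim is then equivalent to
\[
2(1-q)^2 - q(1-q)\frac{(n-1)^2}{2} - \frac{2q^2}{(1-q)^{n-3}}\binom{n}{3} > 0 .
\]
The awkward piece is the factor $(1-q)^{-(n-3)}$ in the last term. Since $q < \frac{4}{(n+7)(n-1)}$, Bernoulli's inequality gives $(1-q)^{n-3} \ge 1-(n-3)q$. Moreover $(n-3)q < \frac{4(n-3)}{(n+7)(n-1)} \le \frac{1}{6}$ for $n\ge 5$, so $(1-q)^{n-3}$ is bounded below by an absolute constant $c$ close to $1$. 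After this step only a polynomial inequality in $q$ of degree $2$ remains.

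An alternative is to avoid the division and work with the original form directly. Write $(1-q)^{n-1} = (1-q)^{n-2}(1-q)$. The first two terms then give
\[
(1-q)^{n-2}\Paren{2 - 2q - q\tfrac{(n-1)^2}{2}},
\]
and it remains to show that this dominates $2q^2\binom{n}{3}$. Factoring out $q$ is not needed. Instead, use the hypothesis $q\,\frac{(n+7)(n-1)}{4} < 1$, i.e.\ $q(n^2+6n-7) < 4$, and note
\[
2 - 2q - q\frac{(n-1)^2}{2} = 2 - \frac{q}{2}\Paren{(n-1)^2 + 4} .
\]
One checks that $(n+7)(n-1) - \Paren{(n-1)^2+4} = 8n-12$. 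This slack term, of size about $\frac{q}{2}(8n-12)\approx 4qn$, is what has to absorb both the cubic term $2q^2\binom{n}{3}\approx \frac{q^2n^3}{3}$ and the loss from $(1-q)^{n-2}$.

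The expected obstacle is bookkeeping the constants so that the threshold $\frac{4}{(n+7)(n-1)}$ works exactly. With $q\approx 4/n^2$ we have $q^2 n^3/3 \approx \frac{16}{3n}$, while the slack $\approx 4qn\cdot(1-q)^{n-2}$ is about $16/n$ times a constant near $1$. So there is roughly a factor $3$ of room, which suggests the crude bounds above are enough. Concretely, I would substitute $(1-q)^{n-2} \ge 1-(n-2)q$ and use $q < 4/((n+7)(n-1))$ to bound each remaining $q$ factor. This reduces the claim to an explicit rational-function inequality in $n$. I would verify it directly at $n=5$ and then for $n\ge 5$ by comparing leading terms, or equivalently by clearing denominators and checking that the resulting polynomial is positive for $n\ge5$.
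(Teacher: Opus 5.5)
Your concrete plan---pull $(1-q)^{n-2}$ out of the first two terms, apply Bernoulli's inequality $(1-q)^{n-2}\ge 1-(n-2)q$, and reduce to an explicit inequality---is essentially the paper's proof; the paper finishes by noting that after expanding, the net $q^2$ coefficient $(n-2)\bigl(\tfrac{(n-1)^2+4}{2}-\tfrac{n(n-1)}{3}\bigr)$ is positive and can be dropped, leaving $2-\tfrac{q(n+3)(n-1)}{2}>0$, which the hypothesis implies with room to spare (the paper writes $(n+1)^2$ where your $(n-1)^2+4$ is correct, so its threshold $\tfrac{4}{(n+7)(n-1)}$ is not tight for this argument). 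One slip in your first route: $\tfrac{4(n-3)}{(n+7)(n-1)}\le\tfrac16$ fails for $6\le n\le 12$ (e.g.\ $n=7$ gives $\tfrac{4}{21}$), but $\tfrac15$ holds for every $n$ and suffices there.
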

\begin{proof}
From $q < \frac{4}{(n+7)(n-1)}$, we have
\begin{align*}
& 2(1-q)^{n-1} - q(1-q)^{n-2}\frac{(n-1)^2}{2} - q^2 \cdot 2\binom{n}{3} \\
&= (1 - q)^{n - 2}\Paren{2(1 - q) - q\frac{(n - 1)^2}{2}} - q^2 \cdot 2\binom{n}{3} \\
&= (1 - q)^{n - 2}\Paren{2 - q\frac{(n + 1)^2}{2}} - q^2 \cdot 2\binom{n}{3} \\
&\stackrel{(*)}{\geq} (1 - (n-2)q)\Paren{2 - q\frac{(n + 1)^2}{2}} - q^2 \cdot 2\binom{n}{3} \\
&= 2 - q\frac{(n+1)^2}{2} - 2q(n-2) + q^2(n-2)\frac{(n+1)^2}{2} - q^2 \cdot 2\binom{n}{3} \\
&= 2 - q \cdot \frac{(n+7)(n-1)}{2} + q^2 \cdot (n - 2) \cdot \left(\frac{(n + 1)^2}{2} - \frac{n(n-1)}{3} \right) \\
& \stackrel{(**)}{>} 2 - q \cdot \frac{(n+7)(n-1)}{2}  > 0,
\end{align*}
where in (*), we applied Bernoulli's inequality, $(1-q)^{n-2} \ge 1 - (n-2)q$, to the first term and 
in (**), we used that the numerator
of the dropped term is positive
for $n \geq 5$.
\end{proof}

For any $f : \{-1, 1\}^n \to \{-1, 1\}$ we will consider the ``gap'' between $f$ and $\Maj_n$ that has a very specific form. In particular, the gap will have the following binomial-expansion-like form:
\begin{align} \label{eq:gap-generic}
G_q[f] = \sum_{k=1}^{n} \Brack{\textstyle q^k (1-q)^{n-k} \cdot \sum_{S \in \binom{[n]}{k}} G_S(f)},
\end{align}
for some $G_S(f) \in \R$.
We prove a generic bound on $G_q[f]$ in terms of the subset expressions $G_S(\cdot)$.  
\begin{lemma}
\label{lem:gap}
Suppose there is a constant $c > 0$ such that:
\begin{enumerate}[leftmargin=7mm]
\item[(i)] For $i \in [n]$, $G_{\{i\}}(f) = c \cdot \Ex_{\bx}[x_i \cdot (\Maj_n(\bx) - f(\bx))]$.
\item[(ii)] For $i \ne j \in [n]$, 
$G_{\{i,j\}}(f) \ge 
c \cdot \Paren{\frac{1}{4} \Ex_{\bx}[(x_i+x_j)(\Maj_n(\bx)-f(\bx)) \mid x_i=x_j] - 
\Pr[\bx \in E_f \mid x_i \ne x_j]}.$
\item[(iii)] For $S$ such that $|S| \geq 3$, $G_S(f) \ge -2c \cdot \Pr[\bx \in E_f]$.
\end{enumerate}
Then, $G_q[f] > 0$ for $q < \frac{4}{(n+7)(n-1)}$.
\end{lemma}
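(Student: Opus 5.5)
We group the terms of $G_q[f]$ by the size of $S$, bound each group from below in terms of $\mu_f$ and the first-level quantity $L := \sum_{i} \Ex_{\bx}[x_i(\Maj_n(\bx) - f(\bx))]$, and then invoke \Cref{lem:maj-diff} and \Cref{lem:qvalue}. We may assume $f \neq \Maj_n$, so that $\mu_f > 0$. (If $f = \Maj_n$, every lower bound below is zero; in that case the claim should be read as strictness for $f \ne \Maj_n$.)

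\textbf{The $k=1$ and $k\ge 3$ terms.} By (i), the $k=1$ term equals $q(1-q)^{n-1}\, c\, L$. By (iii), the terms with $k \ge 3$ contribute at least $-2c\mu_f \sum_{k\ge 3}\binom{n}{k} q^k(1-q)^{n-k}$. This is at least $-2c\mu_f \Pr[\Bin(n,q)\ge 3] \ge -2c\mu_f\binom{n}{3}q^3$, using the union bound over $3$-subsets.

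\textbf{The $k=2$ term.} Sum (ii) over the unordered pairs $\{i,j\}$. Since $x_i = x_j$ with probability $1/2$,
\[
\Ex[(x_i+x_j)(\Maj_n - f)\mid x_i=x_j] = 2\,\Ex[(x_i+x_j)(\Maj_n-f)\bone[x_i=x_j]] .
\]
Moreover $(x_i+x_j)\bone[x_i=x_j] = (x_i+x_j)$ pointwise, because $x_i + x_j = 0$ whenever $x_i \ne x_j$. So the first part of (ii) equals $\tfrac12\Ex[(x_i+x_j)(\Maj_n-f)]$. Summing over pairs, each index appears in $n-1$ pairs, giving $\tfrac{n-1}{2}L$.

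For the second part, $\sum_{i<j}\Pr[\bx\in E_f \mid x_i\ne x_j] = 2\sum_{i<j}\Pr[\bx\in E_f,\,x_i\ne x_j] = 2\,\Ex[\bone_{E_f}\cdot \#\{i<j: x_i\ne x_j\}]$. Since $\#\{i<j : x_i\ne x_j\} = a(n-a) \le \frac{n^2-1}{4}$ for $n$ odd, this is at most $\frac{n^2-1}{2}\mu_f$. Hence the $k=2$ term is at least $q^2(1-q)^{n-2}\,c\,\bigl(\tfrac{n-1}{2}L - \tfrac{(n-1)(n+1)}{2}\mu_f\bigr)$.

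\textbf{Combining.} Here $L \ge 0$ is needed, and \Cref{lem:maj-diff} gives $L \ge 2\mu_f$. Factoring out $cq$, we get
\[
G_q[f] \ge cq\Bigl[\,L\Bigl((1-q)^{n-1} + q(1-q)^{n-2}\tfrac{n-1}{2}\Bigr) - \mu_f\Bigl(q(1-q)^{n-2}\tfrac{(n-1)(n+1)}{2} + 2\tbinom n3 q^2\Bigr)\Bigr].
\]
The coefficient of $L$ is nonnegative, so we may replace $L$ by $2\mu_f$. The bracket then becomes $\mu_f$ times
\[
2(1-q)^{n-1} + q(1-q)^{n-2}(n-1) - q(1-q)^{n-2}\tfrac{(n-1)(n+1)}{2} - 2\tbinom n3 q^2 .
\]
The middle two terms combine to $-q(1-q)^{n-2}\frac{(n-1)^2}{2}$, since $\frac{(n-1)(n+1)}{2} - (n-1) = \frac{(n-1)^2}{2}$. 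This is exactly the expression in \Cref{lem:qvalue}, which is positive for $q < \frac{4}{(n+7)(n-1)}$. Therefore $G_q[f] > 0$.

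\textbf{Main obstacle.} The delicate step is the $k=2$ bookkeeping. The conditional expectation has to be converted into an unconditional one with the right factor, and the pair-count bound $a(n-a)\le (n^2-1)/4$ must be tight enough that the constants match \Cref{lem:qvalue}.
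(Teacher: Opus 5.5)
Your proposal is correct and follows essentially the same route as the paper: you group $G_q[f]$ by $|S|$, turn the $k=2$ conditional expectation into $\tfrac{n-1}{2}L$, bound the number of disagreeing pairs by $\frac{n^2-1}{4}$, use the union bound for $k\ge 3$, and finish with \Cref{lem:maj-diff} and \Cref{lem:qvalue}. The only difference is that the paper applies $L\ge 2\mu_f$ to the $k=1$ and $k=2$ pieces separately rather than after collecting them, which is equivalent. Your caveat that strict positivity needs $\mu_f>0$ (and, tacitly, $q>0$) is well taken; the paper's final strict inequality relies on the same implicit assumptions.
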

\begin{proof}
We lower bound the LHS using our assumptions on each term on the RHS.  For the $k = 1$ term, we sum over all singletons and apply~\Cref{lem:maj-diff} to obtain
\[
\sum_{i \in [n]} G_{\{i\}}(f) 
= c \cdot \sum_{i \in [n]} \Ex_{\bx}[x_i \cdot (\Maj_n(\bx) - f(\bx))] 
\ge 2c \cdot \mu_f.
\]
Thus, the contribution to $G_q[f]$ from the $k = 1$ term is at least $2c \cdot \mu_f \cdot q(1-q)^{n-1}$.

For the term corresponding to $k = 2$, we sum over all $\binom{n}{2}$ pairs.  Note that the bound in our assumption has two components.  Considering
the first component, we obtain
\begin{align*}
c \cdot 
\sum_{i < j} \frac{1}{4} \Ex_{\bx}[(x_i+x_j)(\Maj_n(\bx)-f(\bx)) \mid x_i = x_j] 
& = \frac{c}{2} \cdot 
\sum_{i < j} 
\Ex_{\bx}[(x_i+x_j)(\Maj_n(\bx)-f(\bx))] \\
& = c \cdot \frac{n-1}{2} 
\sum_{i \in [n]} \Ex_{\bx} [x_i (\Maj_n(\bx) - f(\bx))] \\
& \ge c(n-1)\mu_f,
\end{align*}
where the first equality follows by observing that if $x_i \ne x_j$ then $x_i + x_j = 0$, so we can drop the conditioning on $x_i = x_j$ by gaining a factor of $2$, and the last step follows from \Cref{lem:maj-diff}.  
Now we consider the second component.  
\begin{align*}
-c \cdot \sum_{i < j} \Pr \Brack{\bx \in \diffset_f  ~\middle|~ x_i \ne x_j} 
&= -2c \cdot \sum_{i < j} \Pr \Brack{\bx \in \diffset_f  \land x_i \ne x_j} \\
&= -2c \binom{n}{2} \cdot \Pr_{\bx, S = \{i, j\}} \Brack{\bx \in \diffset_f  \land x_i \ne x_j} \\
&= -2c \binom{n}{2} \cdot \Pr_{\bx, S = \{i, j\}} \Brack{x_i \ne x_j ~\middle|~ \bx \in \diffset_f} \cdot \mu_f \\
&\geq -\frac{c}{2} (n^2 - 1) \cdot \mu_f,
\end{align*}
where the last inequality follows from the fact that, 
for any $\bx \in \{-1, 1\}^n$, the probability (over random $\{i, j\} \in \binom{[n]}{2}$) that $x_i \ne x_j$ is at most $\frac{\Paren{\frac{n+1}{2}}\Paren{\frac{n-1}{2}}}{\binom{n}{2}} = \frac{(n^2 - 1)}{4 \binom{n}{2}}$.

Thus, the contribution to $G_q[f]$ from the $k=2$ term is at least 
\[
q^2(1-q)^{n-2} \left( c(n-1)\mu_f - c\frac{n^2-1}{2}\mu_f \right) = -c \cdot q^2(1-q)^{n-2}\frac{(n-1)^2}{2}\mu_f.
\]

For the terms corresponding to $k \ge 3$, we lower bound the contribution to $G_q[f]$ as follows: 
\begin{align*}
\sum_{k=3}^n q^k (1-q)^{n-k} \sum_{S \in \binom{[n]}{k}} G_S(f) & \ge -2c \sum_{k=3}^n q^k (1 - q)^{n-k} \sum_{S \in \binom{[n]}{k}} \Pr[x \in E_f] \\ 
&\ge -2c \cdot \mu_f \sum_{k=3}^n q^k (1 - q)^{n-k} \binom{n}{k} \\
& \ge -2c \cdot \mu_f q^3 \binom{n}{3},
\end{align*}
where the last inequality is due to the union bound: $\sum_{k=3}^n q^k (1 - q)^{n-k} \binom{n}{k}$ is the probability that there are at least 3 successes from $n$ i.i.d. Bernoulli random variables with success probability $q$.

Summing the above contributions and applying
\Cref{lem:qvalue} with the choice of $q$, we obtain
\begin{align*}
G_q[f] & \geq c \cdot \mu_f \cdot q \left( 2(1-q)^{n-1} - q(1-q)^{n-2}\frac{(n-1)^2}{2} - q^2 \cdot 2\binom{n}{3} \right) > 0.
\qedhere
\end{align*}
\end{proof}

\subsection{Bounds on the Gaps}

\paragraph{\boldmath Derivation for $\Stab_\rho$.}
We assume here that $f : \{-1, 1\}^n \to \{-1, 1\}$ is monotone (but not necessarily unbiased). Let $\rho = 1 - 2q$. Recall that 
\[
\Stab_\rho[f] - \Stab_\rho[\Maj_n] = \sum_{k=1}^n q^k(1-q)^{n-k} \sum_{S \in \binom{[n]}{k}} G_S(f),
\]
where
\begin{align*}
G_S(f)
&= \Ex_{\bx} \left[f(\bx)f(\bx^{\oplus S}) - \Maj_n(\bx)\Maj_n(\bx^{\oplus S}) \right] \\
&= \Ex_{\bx} \left[ |\Maj_n(\bx)-\Maj_n(\bx^{\oplus S})| - |f(\bx)-f(\bx^{\oplus S})| \right].
\end{align*}
Here, we are using that sampling $(\bx, \by) \sim \DSBS_\rho^{\otimes n}$ is equivalent to sampling $\bx$ uniformly, a subset $S$ by including each element with probability $\frac{1 - \rho}{2} = q$ and setting $\by = \bx^{\oplus S}$ (flipping all coordinates in $S$).
We wish to apply \Cref{lem:gap} to this expansion. We have the following bounds. 
\begin{itemize}
\item For $|S| = 1$, let $S = \{i\}$. We have $G_{\{i\}}(f) = \Ex_{\bx} [2x_i(\Maj_n(\bx) - f(\bx))]$, since for any monotone function $h : \{-1, 1\}^n \to \{-1, 1\}$, it holds that $\Ex_{\bx} [|h(\bx) - h(\bx^{\oplus i})|] = \Ex_{\bx} [2 x_i h(\bx)]$. Thus $G_{\{i\}}[f]$ satisfies \Cref{lem:gap}(i) with $c=2$.
\item For $|S|= 2$, let $S = \{i, j\}$; we have two cases.  If $x_i = x_j$, then
\begin{align*}
&\frac{1}{2} \cdot \Ex_{\bx}\Brack{|\Maj_n(\bx) - \Maj_n(\bx^{\oplus S})| - |f(\bx) - f(\bx^{\oplus S})| ~\middle|~ x_i = x_j} \\
&= \frac{1}{4} \cdot \Ex_{\bx}\Brack{(x_i + x_j)\Paren{\Maj_n(\bx) - \Maj_n(\bx^{\oplus S})} - (x_i + x_j)\Paren{f(\bx) - f(\bx^{\oplus S})} ~\middle|~ x_i = x_j} \\
&= \frac{1}{2} \Ex_{\bx}\Brack{(x_i + x_j)\cdot (\Maj_n(\bx) - f(\bx))~\middle|~ x_i = x_j},
\end{align*}
where the first step uses the monotonicity of $\Maj_n$ and $f$.
If $x_i \ne x_j$, notice that $\Maj_n(\bx) = \Maj_n(\bx^{\oplus S})$. Thus, we have
\begin{align*}
&\frac{1}{2} \cdot \Ex_{\bx}\Brack{|\Maj_n(\bx) - \Maj_n(\bx^{\oplus S})| - |f(\bx) - f(\bx^{\oplus S})| ~\middle|~ x_i \ne x_j} \\
&\geq - \Pr_{\bx}\Brack{\bx \in \diffset_f \lor \bx^{\oplus S} \in \diffset_f  ~\middle|~ x_i \ne x_j} \\
& = -2 \Pr_{\bx}\Brack{\bx \in \diffset_f ~\middle|~ x_i \ne x_j}.
\end{align*}
Putting these bounds together, 
\[
G_{\{i, j\}}(f) \geq 
\frac{1}{2} \Ex_{\bx}\Brack{(x_i + x_j)\cdot (\Maj_n(\bx) - f(\bx)) ~\middle|~ x_i = x_j} - 2 \Pr_{\bx}\Brack{\bx \in \diffset_f ~\middle|~ x_i \ne x_j}.
\]
This satisfies \Cref{lem:gap}(ii) also with $c=2$.
\item For $|S| \ge 3$, we have 
\begin{align*}
G_S(f) & = \Ex_{\bx}\Brack{|\Maj_n(\bx) - \Maj_n(\bx^{\oplus S})| - |f(\bx) - f(\bx^{\oplus S})|} \\
&\geq - 2 \cdot \Pr_{\bx}\Brack{\bx \in \diffset_f \lor \bx^{\oplus S} \in \diffset_f}
\geq -4 \cdot  \Pr\Brack{\bx \in \diffset_f},
\end{align*}
This satisfies \Cref{lem:gap}(iii) also with $c=2$. 
\end{itemize}
From \Cref{lem:gap}, we conclude  $\Stab_\rho[f] - \Stab_\rho[\Maj_n] > 0$ as desired.

\paragraph{\boldmath Derivation for $\Phi_p$.}
We assume here that $f$ is unbiased.
Let $g^\star$ be the optimal function in the
definition of $\Phi_p[f]$ defined as\footnote{If the inner expectation $\Ex_{\bx_{\oS} \in \{-1, 1\}^{\oS}}[f(\bx_S \circ \bx_{\oS})]$ is zero, then we may select $g^\star(\bx_S \circ \bzero_{\oS})$ arbitrarily.} $g^\star(\bx_S \circ \bzero_{\oS}) := \sign\left(\Ex_{\bx_{\oS} \in \{-1, 1\}^{\oS}}[f(\bx_S \circ \bx_{\oS})]\right)$.  Note that $\Phi_p[\Maj_n] \geq \Ex_{\bx, S}[\Maj_n(\bx) \cdot g^\star\Paren{\bx_S \circ \bzero_{\oS}}]$.  Therefore, 
\begin{align*}
\Phi_{p}[\Maj_n] - \Phi_{p}[f] 
& \geq \Ex_{\bx, S}\Brack{ \Paren{ \Maj_n(\bx) - f(\bx) } \cdot g^\star\Paren{\bx_S \circ \bzero_{\oS}} } 
= \sum_{k = 1}^n q^k (1 - q)^{n-k} 
\sum_{S \in \binom{n}{k}} G_S(f),
\end{align*}
where $q = p$ and 
\[
G_S(f) = \Ex_{\bx}\Brack{ \Paren{ \Maj_n(\bx) - f(\bx) } \cdot g^\star\Paren{\bx_S \circ \bzero_{\bar{\bS}}}},
\]
and we note that the $k = 0$ term disappears since $f, \Maj_n$ are both unbiased.

As noted in \citep{ivanisvili2025nicd}, we may assume\footnote{This follows directly from monotonicity. Even without monotonicity, this simply follows by ``flipping coordinate $i$'' that violates this condition, which does not change $\Phi_p[f]$; see the proof of \Cref{lem:three-case-reduction} for a formalization.} w.l.o.g. that the first-level Fourier coefficient is non-negative, i.e., 
$\Ex_{\bx} [x_i \cdot f(\bx)] \ge 0$. Observe that this implies $g^\star(x_i \circ \bzero_{[n] \setminus \{i\}}) = x_i$.
\begin{lemma}
For any event $C$, 
$\Ex_{\bx}[(\Maj_n(\bx) - f(\bx)) \cdot g^\star(\bx_S \circ \bzero_{\oS}) \mid C] \geq -2 \Pr[\bx \in E_f \mid C].$
\label{lem:small}
\end{lemma}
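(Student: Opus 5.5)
This is a pointwise bound, so I will argue for each $\bx$ separately and then average over the conditional distribution given $C$. Note that $g^\star$ takes values in $\{-1,1\}$; if the inner expectation vanishes, the arbitrary choice of sign is still $\pm 1$. Also, $\Maj_n(\bx) - f(\bx) \in \{-2, 0, 2\}$, and it is nonzero exactly when $\bx \in \diffset_f$.

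The key step is the pointwise inequality
\[
(\Maj_n(\bx) - f(\bx)) \cdot g^\star(\bx_S \circ \bzero_{\oS}) \;\ge\; -2 \cdot \bone[\bx \in \diffset_f]
\]
for every $\bx$ and every $S$. To prove it, I will split into two cases:
\begin{enumerate}
\item If $\bx \notin \diffset_f$, the left side is $0$, and so is the right side.
\item If $\bx \in \diffset_f$, the left side is a product of a term of absolute value $2$ and a term of absolute value $1$. It is therefore at least $-2$, which is the right side.
\end{enumerate}

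Taking the conditional expectation over $\bx$ given $C$ then gives
\[
\Ex_{\bx}\Brack{(\Maj_n(\bx) - f(\bx)) \cdot g^\star(\bx_S \circ \bzero_{\oS}) ~\middle|~ C} \;\ge\; -2\Pr\Brack{\bx \in \diffset_f ~\middle|~ C}.
\]
If $C$ also involves the random set $S$, nothing changes, because the pointwise bound holds for every pair $(\bx, S)$. There is no real obstacle. The only thing to check is that $g^\star$ is $\{-1,1\}$-valued, or at most $1$ in absolute value, including the degenerate case where the inner expectation is zero. The footnote's arbitrary choice of sign guarantees this.
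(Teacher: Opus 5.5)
Your proposal is correct and follows the paper's proof: both use the pointwise bound that $(\Maj_n(\bx) - f(\bx)) \cdot g^\star(\bx_S \circ \bzero_{\oS})$ is $0$ off $\diffset_f$ and at least $-2$ on $\diffset_f$, since $|g^\star| \le 1$. Averaging that bound over the conditional distribution given $C$ gives the lemma.
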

\begin{proof}
Note that since $|g^\star| \leq 1$, we have that $(\Maj_n(x) - f(x)) \cdot g^\star(\cdot)$ evaluates to $0$ if $x \notin E_f$, and is lower bounded by $-2$ if $x \in E_f$.  The proof follows. 
\end{proof}

We wish to apply \Cref{lem:gap} to
this expansion.  First, consider the case of $|S| = 1$, and let $S = \{i\}$.
Since 
$g^\star(x_i \circ \bzero_{n \setminus \{i\}}) = x_i$, we have $G_{\{i\}}(f) = \Ex_{\bx}[x_i(\Maj_n(\bx) - f(\bx))]$; this satisfies \Cref{lem:gap}(i) with $c=1$.

For $|S| > 1$, we will separate into two cases based on whether $f$ is monotone.\\

\noindent {\em When $f$ is monotone.}
In this case, we have that

\begin{itemize}

\item For $|S|= 2$, let $S = \{i,j\}$.
We have two cases.  If $x_i = x_j$, for an unbiased monotone function,  notice that we simply have $g^\star(\bx_S \circ \bzero_{\oS}) = \sign(x_i + x_j) = \frac{x_i + x_j}{2}$. Thus,
\begin{align*}
& \frac{1}{2} \Ex_{\bx}\Brack{ \Paren{ \Maj_n(\bx) - f(\bx) } \cdot g^\star\Paren{\bx_{\{i,j\}} \circ \bzero_{[n] \setminus \{i,j\}}} \mid x_i = x_j}\\
& = \frac{1}{4}
\Ex_{\bx}\Brack{ (x_i + x_j) \Paren{ \Maj_n(\bx) - f(\bx) } \mid x_i = x_j}.
\end{align*}
If $x_i \neq x_j$, applying \Cref{lem:small}, we have 
\[
\frac{1}{2} \Ex_{\bx}\Brack{ \Paren{ \Maj_n(\bx) - f(\bx) } \cdot g^\star\Paren{\bx_{\{i,j\}} \circ \bzero_{[n] \setminus \{i,j\}}} \mid x_i \neq x_j}
\geq 
-\Pr[\bx \in E_f \mid x_i \neq x_j].
\]
Putting these bounds together, we have \Cref{lem:gap}(ii) with $c = 1$. 
\item For $|S| \ge 3$, we have by \Cref{lem:small} that
$G_S(f) \geq -2 \Pr[\bx \in E_f]$.  Thus, we have \Cref{lem:gap}(iii) with $c = 1$. 
\end{itemize}
From \Cref{lem:gap}, we conclude  $\Phi_p[\Maj_n] - \Phi_p[f] > 0$ as desired.\\

\noindent {\em When $f$ is not monotone.}
Finally, consider the case where $f$ is non-monotone. In this case, we simply apply \Cref{lem:small} to
obtain a tail bound on all terms, including the second term, 
\begin{align*}
\sum_{k=2}^{n} q^k (1-q)^{n-k} \sum_{S \in \binom{[n]}{k}} G_S(f) \geq
\sum_{k=2}^{n} q^k (1-q)^{n-k} \sum_{S \in \binom{[n]}{k}} (-2\mu_f) \geq -2\mu_f \cdot q^2 \binom{n}{2}
= -\mu_f \cdot q^2 n(n - 1).
\end{align*}
Combining with the first term, we get
\begin{align*}
\Phi_{p}[\Maj_n] - \Phi_{p}[f] &\geq q(1 - q)^{n - 1} \cdot 2\mu_f - \mu_f \cdot q^2 n(n - 1) 
= \mu_f q (2(1 - q)^{n - 1} - qn(n - 1)) \\
&\geq \mu_f q (2(1 - (n - 1)q) - qn(n - 1))) > 0,
\end{align*}
where we used Bernoulli's inequality in the penultimate step and our assumption $0 < q < \frac{2}{(n+2)(n-1)}$ in the last step.

\section{\boldmath The Case of \texorpdfstring{$n=3$}{n=3} (Proof of \texorpdfstring{\Cref{thm:n3}}{Theorem~\ref{thm:n3}})}

While the previous sections established  bounds for general $n$, the case of $n=3$ is small enough that we can evaluate the veracity of both
conjectures.  In fact, we use techniques from \Cref{sec:tight}.

\paragraph{\boldmath Derivation for $\Stab_\rho$.}
Assume that $f$ is monotone and odd.
We can use the same expression for $G_S(f)$ for the case $k = 1, 2$ as in \Cref{sec:tight}. Finally, notice that for $k = 3$ (i.e., $S = \{1, 2, 3\}$, we simply have $G_S(f) = 0$ since $f$ is odd. As a result, following the inequalities in \Cref{lem:gap} but without the $k \geq 3$ term, we get
\[
G_q[f] \geq c \cdot \mu_f \cdot q \left( 2(1-q)^{n-1} - q(1-q)^{n-2}\frac{(n-1)^2}{2}\right) 
= c \cdot \mu_f \cdot q (1 - q) \left(2 - 4q\right) > 0,
\]
where in the first equality we use $n = 3$ and in the last inequality we use $0 < q < 1/2$.

\paragraph{\boldmath Derivation for $\Phi_p$.} In this case, we will use the following lemma.

\begin{lemma} \label{lem:three-case-reduction}
For any unbiased $f: \{-1, 1\}^3 \to \{-1, 1\}$, there exists another unbiased function $f': \{-1, 1\}^3 \to \{-1, 1\}$ such that $\Phi_{\rho}[f'] = \Phi_\rho[f]$ and there is an optimum $g^\star$ in the definition of $\Phi_{\rho}[f]$ that satisfies:
\begin{itemize}
\item[(i)] $g^\star(x_i \circ \bzero_{[n] \setminus \{i\}}) = x_i$ for all $i \in [3]$ and $\bx \in \{-1, 1\}^3$ and
\item[(ii)] $g^\star(\bx_S \circ \bzero_{\oS}) = (x_i + x_j)/2$ for all $i \ne j$ and $\bx \in \{-1, 1\}^3$ such that $x_i = x_j$.
\end{itemize}
\end{lemma}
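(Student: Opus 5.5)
The plan is to produce $f'$ from $f$ by symmetries of the cube that leave $\Phi_p$ and unbiasedness unchanged, and then to show that a suitably chosen representative satisfies (i) and (ii). The symmetries are: flipping an input coordinate, $x_i \mapsto -x_i$; permuting coordinates; and negating the output, $f \mapsto -f$. Each preserves $\Ex[f]=0$. Each also preserves $\Phi_p$, because the law of the erasure set $S$ is invariant under these maps and $\Ex_{\bx_S}\big|\Ex_{\bx_{\oS}}[f]\big|$ is unchanged when $\bx_S$ is pushed forward by a sign flip or permutation.

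We also use the freedom in the footnote: wherever $\Ex_{\bx_{\oS}}[f(\bx_S\circ\bx_{\oS})]=0$, the value of $g^\star$ may be set arbitrarily. So each condition in (i) and (ii) only has to be checked where the relevant conditional expectation is nonzero.

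For (i), unbiasedness gives $\Ex[f\mid x_i]=\hat f(i)\,x_i$. Flipping every coordinate $i$ with $\hat f(i)<0$ makes all first-level coefficients nonnegative, which is exactly the reduction mentioned in the earlier footnote. Then $g^\star(x_i\circ\bzero)=x_i$: this is forced when $\hat f(i)>0$ and can be chosen when $\hat f(i)=0$.

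For (ii), fix a pair $\{i,j\}$ and let $k$ be the third coordinate. The quantity $\Ex[f\mid x_i=x_j=s]=\tfrac12\big(f(s,s,1)+f(s,s,-1)\big)$ (coordinates ordered $i,j,k$) lies in $\{-1,0,1\}$. Condition (ii) fails only if, for some $s$, this average equals $-s$. Write $a_s$ for $\sum_{x_k} f(s,s,x_k)$, so that $\hat f(i)+\hat f(j)=\tfrac14(a_{1}-a_{-1})$.

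If $a_1<0<a_{-1}$, then flipping both $x_i$ and $x_j$ swaps the two slices. This makes the pair sum positive and leaves $\hat f(k)$ untouched. To keep (i) intact, I would choose among all images of $f$ under the symmetry group one that maximizes $\sum_i \hat f(i)$, breaking ties by the number of pairs satisfying (ii). Any violating pair with $a_1<0<a_{-1}$ would then contradict maximality, once I check that the double flip does not make some $\hat f(i)$ negative. Since $a_1-a_{-1}<0$ at the violation, $\hat f(i)+\hat f(j)<0$, so one of them is already negative, contradicting (i). So in this case the only real issue is ordering the normalizations consistently.

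The main obstacle is the remaining case, where $a_1$ and $a_{-1}$ are nonzero with the same sign. Combined with unbiasedness, this pins $f$ down to a very small family of functions, essentially $\pm x_ix_j$ up to symmetry, possibly with parity-type pieces. For these, no cube symmetry repairs (ii): for $f=-x_1x_2$, sign flips only toggle between $\pm x_1x_2$. My plan here is to enumerate this family explicitly using the constraint $\hat f(\cdot)\in\tfrac14\Z$ and unbiasedness. I would first check directly whether $\Phi_p$ of such $f$ coincides with that of some $f'$ that does satisfy (i) and (ii), which is plausible for the specific $\rho$ at hand. Failing that, I would check whether the intended use in \Cref{thm:n3} only needs the lemma for $f$ with $\hat f(1)+\hat f(2)+\hat f(3)>0$, handling the residual functions separately by the crude bound of \Cref{lem:small}. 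I expect pinning down exactly how this degenerate case is absorbed to be the delicate step.
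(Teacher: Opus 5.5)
You follow the paper's route up to the degenerate case. You flip input coordinates so that every $\hat f(i)\ge 0$, which gives (i), and use $\hat f(i)+\hat f(j)=\tfrac{1}{4}(a_1-a_{-1})\ge 0$ to exclude the wrong-orientation failures of (ii). Two small corrections there:
\begin{itemize}
\item Your case split omits $(a_1,a_{-1})=(-2,0)$ and $(0,2)$, but the same inequality excludes them.
\item Once $a_1\ge a_{-1}$, any failure of (ii) forces $a_1=a_{-1}=\pm 2$. With unbiasedness this gives exactly $f=\pm x_ix_j$: there are no parity pieces, and the maximization and tie-breaking scheme is unnecessary.
\end{itemize}
The paper arrives at the same function, $f'=-x_1x_2$, and closes the case by asserting that $x_1x_2$ satisfies (i) and (ii). That assertion is false, for the reason you give. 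We have $\Ex[x_1x_2\mid x_1=x_2=-1]=1$, and the observation $(-1,-1,\star)$ has probability $\tfrac{1}{4}p^2(1-p)>0$. So every optimal $g^\star$ must equal $1\neq(x_1+x_2)/2$ there. The obstacle you flagged is a real hole in the paper's proof.

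Moreover, the hole cannot be filled, so your first fallback will fail: the lemma as stated is false for $f=\pm x_1x_2$. For balanced $f$ on three bits, $\hat f(S)=1-d/4$, where $d$ is the number of disagreements with $\chi_S(\bx)=\prod_{i\in S}x_i$. Since $f$ and $\chi_S$ are both balanced, $d$ is even, so every coefficient lies in $\{0,\pm\tfrac{1}{2},\pm1\}$. Hence, up to symmetry, $f$ is one of $x_1$, $x_1x_2$, $x_1x_2x_3$, $\Maj_3$, the selector $\tfrac{1}{2}(x_1+x_2+x_1x_3-x_2x_3)$, or $x_1\cdot\mathrm{OR}(x_2,x_3)$. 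Their values of $\Phi_p$ are, respectively, $p$, $p^2$, $p^3$, $\tfrac{3}{2}p(1-p)+p^3$, $p(1-\tfrac{p}{2}+\tfrac{p^2}{2})$, and $\tfrac{1}{2}p(1-p)^2+p^2$. For $p\in(0,1)$, only the six functions $\pm x_ix_j$ have $\Phi_p=p^2$, and each of them violates (ii) on its own pair.

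Your second fallback is the right repair. Prove the lemma for $f\notin\{\pm x_ix_j\}$; your argument and the paper's both do this. Then treat those six functions separately in \Cref{thm:n3}. Do this by direct computation rather than with the crude bound of \Cref{lem:small}, which is too weak on part of $(0,\tfrac{1}{2})$. For example, the paper's non-monotone estimate with $\mu_f=\tfrac{1}{2}$ gives $\tfrac{1}{2}q\bigl(2(1-q)^2-6q\bigr)$, which is positive only for $q<(5-\sqrt{21})/2\approx 0.21$. Directly, $\Phi_p[\Maj_3]-\Phi_p[\pm x_ix_j]=\tfrac{3}{2}p(1-p)+p^3-p^2=p(1-p)(\tfrac{3}{2}-p)>0$, so \Cref{thm:n3} itself survives.
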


Before we prove \Cref{lem:three-case-reduction}, we show how to use this to finish the proof. From the lemma, we may assume w.l.o.g. that $f$ itself has $g^\star$ that satisfies the two properties. Note that these are the only two properties used in derivation of the monotone case of $\Phi_\rho[f]$ in \Cref{sec:tight}. Again, similar to the derivation for $\Stab$ above, by recognizing that the $k = 3$ term cancels out and following the inequalities in \Cref{lem:gap} but without the $k \geq 3$ term, we get
\begin{align*}
G_q[f] &\geq c \cdot \mu_f \cdot q \left( 2(1-q)^{n-1} - q(1-q)^{n-2}\frac{(n-1)^2}{2}\right) \geq 0.
\end{align*}

We will now prove \Cref{lem:three-case-reduction}.

\begin{proof}[Proof of \Cref{lem:three-case-reduction}]
We pick $f'$ of the form $f' = f(\bx \oplus a)$ for some $a \in \{-1, 1\}^3$ that maximizes $\Ex_{\bx}[x_i \cdot f'(\bx)]$, ties  broken arbitrarily. Note that this ensures that $\Phi_{\rho}[f'] = \Phi_\rho[f]$. Furthermore, our choice of $f'$ ensures that $\Ex_{\bx}[x_i \cdot f'(\bx)] \geq 0$ for all $i \in [n]$. That is, property (i) holds.

Now, suppose that property (ii) does not hold. This means that there exists $\bx \in \{-1, 1\}^3$ and $i \ne j$ such that $x_i = x_j$ with $\Ex_{\bx_{\oS} \in \{-1, 1\}^{\oS}}[f'(\bx_S \circ \bx_{\oS})] \cdot (x_i + x_j) < 0$.  By symmetry, we may assume that $i = 1, j = 2$, and $x_i = x_j = 1$. This implies that $f'(1, 1, 1) = f'(1, 1, -1) = -1$. 

However, since $\Ex_{\bx}[x_i \cdot f'(\bx)] \geq 0$ and $f'$ is unbiased, plugging in $i = 1$ and $i = 2$, respectively, implies that
\[
f'(1, -1, -1) = g(1, -1, 1) = 1 \text{ and } f'(-1, 1, 1) = f'(-1, 1, -1) = 1,
\]
respectively. Finally, since $f'$ is unbiased, we also have $f'(-1, -1, 1) = f'(-1, -1, -1) = -1$. This means that $f'(\bx) = -x_1x_2$. However, in this case, we may take $f'(\bx) = x_1x_2$ instead, which satisfies both properties (i) and (ii).
\end{proof}

\bibliographystyle{plainnat}
\bibliography{main.bbl}

\end{document}